\documentclass[12pt]{article}

\usepackage[hmargin=1.3in,vmargin=1.3in]{geometry}
\usepackage{multirow}

\usepackage{amsmath,amsthm,amssymb,amsfonts,verbatim,enumerate}

\numberwithin{equation}{section}

\RequirePackage[colorlinks,pagebackref,linkcolor=blue,filecolor = blue,citecolor = blue, urlcolor =blue]{hyperref}

\newtheorem{lemma}{Lemma}[section]
\newtheorem{theorem}[lemma]{Theorem}

\newtheorem{example}[lemma]{Example}

\newcommand{\be}{\begin{equation}}\newcommand{\ee}{\end{equation}}
\newcommand{\bes}{\begin{equation*}}\newcommand{\ees}{\end{equation*}}
\newcommand{\ba}{\begin{array}}\newcommand{\ea}{\end{array}}
\newcommand{\ben}{\begin{eqnarray}}\newcommand{\een}{\end{eqnarray}}
\newcommand{\bn}{\begin{eqnarray*}}\newcommand{\en}{\end{eqnarray*}}

\theoremstyle{remark}
\newtheorem{rmk}{Remark}
\parskip=0.5ex

\begin{document}
\title{The parameters of a family of linear codes}
\author{\small Weiqiong Wang$^1$\thanks{wqwang@chd.edu.cn}, Yan Wang$^2$\thanks{Corresponding author: lanse-wy@163.com}\\
\small 1. School of Science, Chang'an University, Xi'an, 710064, China\\
\small 2. School of Science, Xi'an University of Architecture and
Technology, Xi'an, 710055, China}
%\author{}
\date{}
\maketitle
\thispagestyle{empty}
\begin{abstract}
  A large family of linear codes with flexible parameters from almost bent functions and perfect nonlinear functions are constructed and their parameters  are  determined. Some constructed linear codes  and their related codes are optimal in the sense that they meet certain bounds on linear codes. Applications of these codes in secret sharing schemes and combinational designs are presented.

\end{abstract}
\small\textbf{Keywords:} Almost bent function, linear code,  planar function, secret sharing scheme, $t-$design
\section{Introduction}

Let $p$ be a prime and $q=p^m$ for some positive integer $m$. An $[n,k,d]$ linear code $\mathcal{C}$ over $\mbox{GF}(p)$ is a $k-$dimensional subspace of $\mbox{GF}(p)^n$ with Hamming distance $d$. Let $A_i(0\leq i\leq n)$ denote the number of codewords with Hamming weight $i$ in a code $\mathcal{C}$. The \emph{weight distribution} $(A_0, A_1, \cdots,  A_n)$
is important in coding theory as it contains important information about the error-correcting capability and the probability of error detection and correction. A code $\mathcal{C}$ is said to be a $t-$weight code if the number of nonzero $A_i$ in the sequence $(A_1, \cdots,  A_n)$ is equal to $t$, and the \emph{weight enumerator} of $\mathcal{C}$ is defined  as
$$A(z)=A_0+A_1z+A_2z^2+\cdots +A_nz^n.$$

Linear codes are widely studied because they have applications in computer and communication systems, data storage devices and consumer electronics. There are several approaches to constructing linear codes. One of which is based on  functions over finite fields. In the past decade, much progress has been made on interplay between functions and codes. Some special types of functions like bent functions, semi-bent functions, almost bent functions, APN functions, planar functions, and Dickson polynomials were employed to construct linear codes with good parameters.
For example, bent functions were extensively employed to construct linear codes with few weight \cite{Ding2015, Mesnager2015code, Tang2016,Wolfmann99, Zhou2015}. This kind of linear codes have applications in secrete sharing \cite{Anderson,CarletDing05,kDing2015,YUanDing2006}, authentication codes \cite{Dingwang2005}, association scheme \cite{Calderbank84},  combinational designs \cite{Ding2017}, and strongly regular graphs \cite{Calderbank86}.

In this paper, we focus on the following generic construction of linear codes:

 \begin{equation}\label{orgdef}
\mathcal{C}_g=\{\big(\mbox{Tr}_1^m(ag(x)+bx)\big)_{x\in \mbox{GF}(p^m)^*}: a,b\in \mbox{GF}(p^m)\},
\end{equation}
where $g$ is a function from $\mbox{GF}(p^m)$ to $\mbox{GF}(p^m)$, $\mbox{Tr}_1^m$ is the trace function from $\mbox{GF}(p^m)$ to $\mbox{GF}(p)$.

Perfect nonlinear functions and almost bent functions have been employed via this approach to construct linear codes in the literature \cite{CarletDing05,Ding2007,Ding2017,DingXiang2015,DingYuan06,Fengluo07}. The constructed codes were proved to have good parameters and could be used to construct secrete sharing scheme and authenticated codes. The main objective of this paper is to generalize this construction by restricting the parameter $a$ in $(\ref{orgdef})$ to certain additive subgroup of $(\mbox{GF}(p^m), +)$ with order $p^r$, where $0\leq r\leq m$, which of course includes the  group $(\mbox{GF}(p^m), +)$ as a special case. We then determine the parameters of the constructed codes  as well as parameters of their related codes. As will be seen, a large family of  the constructed codes are  optimal or at least almost optimal, and can be used in many applications.

Our motivation of this paper is two-fold. From the  theoretical point of view, we can obtain much more good linear codes. Our method does not change the  parameters $n$ and $d$, but the dimension changes to $m+r$ ($1\leq r \leq m$), which may provide more optimal  linear codes. Although our new codes with parameters $[n, m+r, d]$ $(1<r<m)$ are not as good as the original ones with parameters $[n, 2m, d]$, some of them are still distance optimal or almost optimal. Moreover, the dual codes of ours may be better than dual codes  of the original ones. Since the dimension of our dual codes can be much larger than the original ones while the minimum distance does not change much. From the  application point of view, we can obtain more flexible  secrete sharing schemes with high democracy as the access structures derived from our  codes are much more flexible.

\section{Some auxiliary results}
In this section, we  introduce some
auxiliary results which will be required in later sections.

\subsection{Highly nonlinear Boolean functions}

Let $f$ be a Boolean function from $\mbox{GF}(2^m)$ to $\mbox{GF}(2)$.
The \emph{Walsh transform} of $f$ at $w\in \mbox{GF}(2^m)$ is defined by
\begin{equation}
\widehat{f}(w)=\sum_{x\in \text{GF}(2^m)}(-1)^{f(x)+\text{Tr}_1^m(wx)},
\end{equation}
where $\mbox{Tr}_1^m$ is the absolute trace function from $\text{GF}(2^m)$ to $\text{GF}(2)$.
 The \emph{Walsh spectrum} of $f$ is the following multiset
\begin{equation}
\{\widehat{f}(w): w\in \text{GF}(2^m)\}.
\end{equation}

A function $f$ from $\text{GF}(2^m)$  to $\text{GF}(2)$ is called \emph{linear} if $f(x+y)=f(x)+f(y)$ for all $(x, y)\in \text{GF}(2^m)^2$.  It is called \emph{affine} if $f$ or $f-1$ is linear.

A function $f$ from $\text{GF}(2^m)$  to $\text{GF}(2)$ is called \emph{bent} if $|\widehat{f}(w)|=2^{m/2}$ for every $w\in \text{GF}(2^m)$. Bent functions exist only for even $m$, and were coined by Rothaus in \cite{Rothaus1976}.

A subset $D$ with $\kappa$ elements of an abelian group $(A, +)$ of order $\nu$ is called an $(\nu, \kappa, \lambda)$ \emph{difference set} in $(A, +)$ if the multiset $\{x-y: x\in D, y\in D\}$ contains every nonzero element of $A$ exactly $\lambda$ times. It is well-known that a function $f$ from $\text{GF}(2^m)$ to $\text{GF}(2)$ is bent if and only if $D_f$ is a Hadamard difference set in $(\text{GF}(2^m), +)$ with the following parameters
\begin{equation}
(2^m, 2^{m-1}\pm 2^{(m-2)/2}, 2^{m-2}\pm 2^{(m-2)/2}).
\end{equation}

Let $m$ be odd. Then there is no bent function on $\text{GF}(2^m)$. A function $f$ from $\text{GF}(2^m)$ to $GF(2)$ is called \emph{semi-bent} if $\widehat{f}(w)\in \{0, \pm 2^{(m+1)/2}\}$ for every $w\in \text{GF}(2^m)$.

\subsection{Almost bent functions}

For any function $g$ from $\text{GF}(2^m)$ to $\text{GF}(2^m)$, we define

$$\lambda_g(a,b)=\sum_{x\in GF(2^m)}(-1)^{\text{Tr}_1^m(ag(x)+bx)}, \ \ a, b\in \text{GF}(2^m).$$

A function $g$ from $\text{GF}(2^m)$ to $\text{GF}(2^m)$ is called almost bent if $\lambda_g(a, b)=0,$ or $\pm 2^{(m+1)/2}$ for every pair $(a, b)$ with $a\neq 0$.

By definition, almost bent functions over $\text{GF}(2^m)$ exist only for odd $m$. Any almost bent function $g(x)$ provides a set of $2^m-1$ semi-bent functions
$$ \{\text{Tr}_1^m(ag(x)): a\in \text{GF}(2^m)\setminus \{0\}\}.$$

The following is a list of almost bent functions on $\text{GF}(2^m)$, where $m$ is odd.

$1) \ \ g(x)=x^{2^i+1},\ $gcd$(i, m)=1$.

$2) \ \ g(x)=x^{2^{2i}-2^i+1},\ $gcd$(i, m)=1$.

$3) \ \ g(x)=x^{2^{(m-1)/2}+3}$.

$4) \ \ g(x)=x^{2^{(m-1)/2}+2^{(m-1)/4}-1},\ m\equiv 1 ($mod $4)$.

$5) \ \ g(x)=x^{2^{(m-1)/2}+2^{(3m-1)/4}-1},\ m\equiv 3 ($mod$ 4)$.

$6) \ \ g(x)=x^{2^i+1}+(x^{2^i+1}+x)$Tr$_1^m(x^{2^i+1}+x),\ m>3,\ $gcd$(i, m)=1$.

\subsection{Planar functions}\label{sec:planar}
Let $p$ be an odd prime. A function $f$ from $\text{GF}(p^m)$ to itself is called a \emph{planar function} if the difference function $f_a(x)=f(x+a)-f(x)$ is a one-to-one function from $\text{GF}(p^m)$ to itself for every $a\in \text{GF}(p^m)^*$.

The following is a list of some known planar functions on $\text{GF}(p^m)$ $\cite{Carlet2004, Coulter1997, Coulter2002}$:

$1)$ $f_1(x)=x^{p^t+1}$, where $t\geq 0$ is an integer and $m/gcd(m, t)$ is odd (Dembowski-Ostrom \cite{Dembowski1968}, including the function $x^2$ as a special case).

$2)$ $f_2(x)=x^{(3^k+1)/2}$, where $p=3, k$ is odd, and $gcd(m, k)=1$ (Coulter-Matthews \cite{Coulter1997}).

$3)$ $f_3(x)=x^{10}-ux^6-u^2x^2$, where $p=3, $ $m$ is odd, and $u\in \mbox{GF}(p^m)^*$ (Coulter-Matthews $\cite{Coulter1997}$, Ding-Yuan  $\cite{DingYuan}$).

\subsection{The MacWilliams Identity}

Assume $\mathcal{C}^\perp$ is the dual of an $[n, k, d]$ linear code $\mathcal{C}$ over $\text{GF}(p)$. Denote by $A(z)$ and $A^\perp(z)$ the weight enumerators of $\mathcal{C}$ and $\mathcal{C}^\perp$ respectively. The  MacWilliams Identity (\cite{Van99}, p. 41) shows that  $A(z)$ and $A^\perp(z)$ can be derived from each other as follows.

\begin{theorem}
let $C$ be an $[n,k,d]$ code over $\text{GF}(p)$ with weight enumerator $A(z)=\sum_{i=0}^nA_iz^i$, and let $A^\perp(z)$ be the weight enumerator of $C^\perp$. Then
\begin{equation}
A^\perp(z)=p^{-k}(1+(p-1)z)^nA\bigg(\frac{1-z}{1+(q-1)z}\bigg).
\end{equation}
\end{theorem}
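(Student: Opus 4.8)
The plan is to prove the identity by the standard character-sum (double-counting) argument. Fix a nontrivial additive character $\chi$ of $\text{GF}(p)$, e.g. $\chi(x)=e^{2\pi i x/p}$ under the identification $\text{GF}(p)\cong\mathbb{Z}/p\mathbb{Z}$, and for $u=(u_1,\dots,u_n),\,v=(v_1,\dots,v_n)\in\text{GF}(p)^n$ write $\langle u,v\rangle=\sum_{i=1}^n u_iv_i$. Let $h(0)=1$ and $h(a)=z$ for $a\neq 0$, so that $g(v):=\prod_{i=1}^n h(v_i)=z^{\text{wt}(v)}$ and hence $\sum_{v\in C^\perp}g(v)=A^\perp(z)$. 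First I would introduce the double sum
\[
S=\sum_{u\in C}\ \sum_{v\in\text{GF}(p)^n}\chi(\langle u,v\rangle)\,z^{\text{wt}(v)}
\]
and evaluate it in two different orders.

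Summing over $u\in C$ first uses the orthogonality relation $\sum_{u\in C}\chi(\langle u,v\rangle)=|C|$ when $v\in C^\perp$ and $0$ otherwise. This is the crux of the argument: for fixed $v$, the map $u\mapsto\langle u,v\rangle$ is a $\text{GF}(p)$-linear functional on the subspace $C$, which vanishes identically precisely when $v\in C^\perp$ (giving $|C|$ terms equal to $1$) and is otherwise surjective onto $\text{GF}(p)$, hitting each value equally often so that the character sum cancels to $0$. This evaluation yields $S=|C|\,A^\perp(z)=p^k A^\perp(z)$.

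Next I would evaluate $S$ by summing over $v$ first. Since both $z^{\text{wt}(v)}=\prod_i h(v_i)$ and $\chi(\langle u,v\rangle)=\prod_i\chi(u_iv_i)$ factor across coordinates, the inner sum splits as $\prod_{i=1}^n\big(\sum_{a\in\text{GF}(p)}\chi(u_ia)h(a)\big)$. The one-variable character sum is elementary: for $c=u_i=0$ it equals $1+(p-1)z$, while for $c\neq 0$ it equals $1-z$, using $\sum_{a\neq 0}\chi(ca)=-1$. Hence the inner product over coordinates is $(1+(p-1)z)^{\,n-\text{wt}(u)}(1-z)^{\text{wt}(u)}$, and therefore
\[
S=\sum_{u\in C}(1+(p-1)z)^{\,n-\text{wt}(u)}(1-z)^{\text{wt}(u)}.
\]

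Finally I would equate the two evaluations of $S$ and factor out $(1+(p-1)z)^n$, which turns the remaining sum into $\sum_{u\in C}\big(\tfrac{1-z}{1+(p-1)z}\big)^{\text{wt}(u)}=A\big(\tfrac{1-z}{1+(p-1)z}\big)$; dividing by $p^k$ gives the claimed formula. The only genuine obstacle is the orthogonality step, i.e. establishing that the additive characters of $\text{GF}(p)^n$ restricted to the subspace $C$ sum to $|C|$ on $C^\perp$ and to $0$ elsewhere; once this is in hand, the rest is the routine coordinatewise computation of the two character-sum values. (I note that the denominator appearing in the statement should read $1+(p-1)z$, consistent with a code over $\text{GF}(p)$; the $q$ written there is a typographical slip.)
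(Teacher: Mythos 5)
Your character-sum argument is correct and complete. The paper itself offers no proof of this statement: it is quoted as a classical result with a citation to van Lint's \emph{Introduction to Coding Theory}, so there is no internal argument to compare against. Your proof is the standard one (and essentially the one found in the cited reference and in Huffman--Pless): introduce the double sum $S=\sum_{u\in C}\sum_{v\in\mathrm{GF}(p)^n}\chi(\langle u,v\rangle)z^{\mathrm{wt}(v)}$, use orthogonality of characters on the subspace $C$ to get $S=p^kA^\perp(z)$, and use the coordinatewise factorization to get $S=\sum_{u\in C}(1+(p-1)z)^{n-\mathrm{wt}(u)}(1-z)^{\mathrm{wt}(u)}$. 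Both one-variable evaluations ($1+(p-1)z$ when the coordinate of $u$ is zero, $1-z$ otherwise, via $\sum_{a\neq 0}\chi(ca)=-1$) are right, and the orthogonality step is justified correctly by observing that $u\mapsto\langle u,v\rangle$ is a linear functional on $C$ that is either identically zero (precisely when $v\in C^\perp$) or surjective with fibers of equal size. You are also correct that the $q$ appearing in the displayed denominator is a typographical slip for $p$ (the paper elsewhere sets $q=p^m$, which is not what is meant here); with that correction your derivation matches the stated identity exactly.
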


\section{Binary linear codes from almost bent functions}

In this section, we derive  a family of linear codes from almost bent functions, and analyze their parameters as well as the parameters of  their related codes.

For any function $g$ from $\text{GF}(2^m)$ to $\text{GF}(2^m)$ with $g(0)=0$, we define the following linear code
\begin{equation}\label{def:1}
\mathcal{C}_{(g, A)}=\{(\text{Tr}_1^m(ag(x)+bx))_{x\in \text{GF}(2^m)^*}: a\in A, b\in \text{GF}(2^m)\},
\end{equation}
where $A$ is an additive subgroup of $(\text{GF}(2^m), +)$ with order $2^r$,  $0\leq r \leq m$.

For any fixed $a\in A, b\in \text{GF}(2^m)$,  we denote the corresponding codeword by
$$c_{a, b}=(\text{Tr}_1^m(ag(x)+bx))_{x\in \text{GF}(2^m)^*}.$$

When $g$ is an almost bent function and $A=\text{GF}(2^m)$, the parameters and weight distribution of the code $\mathcal{C}_{(g, A)}$ are known in the literature \cite{Ding2017, DingXiang2015}.

We determine in this section the parameters and weight distribution of the code  $\mathcal{C}_{(g, A)}$ for  the case of $g$ being an almost bent function, and $A$ being any additive subgroup of $(\text{GF}(2^m), +)$.

 \subsection{The parameters of $\mathcal{C}_{(g,A)}$ }
\begin{theorem}\label{thm:1}
Let $m\geq 3$ be an odd  integer. Let $g$ be an almost bent function from $\text{GF}(2^m)$ to $\text{GF}(2^m)$ with $g(0)=0$. Let $A$ be an additive subgroup of $(\text{GF}(2^m), +)$ with order $2^r$, where $0\leq r \leq m$. Then the code $\mathcal{C}_{(g, A)}$ defined in $(\ref{def:1})$ has parameters $[2^m-1, m+r, 2^{m-1}- 2^{(m-1)/2}]$ with weight distribution in Table $\ref{table:1}$.

\begin{table}[h]
\caption{Weight distribution of $\mathcal{C}_{(g,A)}$}
\centering

\begin{tabular}{ |l|l| }

\hline

\hline   Weight $w$ & Multiplicity $A_w$ \\
\hline

$0$ & $1$ \\

$2^{m-1}- 2^{(m-1)/2}$ & $(2^r-1)(2^{m-2}+2^{\frac{m-3}{2}})$ \\

$2^{m-1}$ & $2^{m-1}(2^r+1)-1$ \\

$2^{m-1}+ 2^{(m-1)/2}$ & $(2^r-1)(2^{m-2}-2^{\frac{m-3}{2}})$  \\
\hline
\end{tabular}
\label{table:1}
\end{table}
\end{theorem}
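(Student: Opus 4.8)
The plan is to reduce the weight of each codeword to the character sum $\lambda_g(a,b)$, determine how $\lambda_g(a,b)$ is distributed as $(a,b)$ ranges over $A\times\text{GF}(2^m)$, and then read off the weight distribution, dimension, and minimum distance. First I would convert weights into character sums: the weight of $c_{a,b}$ is the number of $x\in\text{GF}(2^m)^*$ with $\text{Tr}_1^m(ag(x)+bx)=1$, so using $\mathbf 1[u=1]=\tfrac12(1-(-1)^u)$ and extending the sum to $x=0$ (legitimate because $g(0)=0$ makes the $x=0$ term equal $(-1)^0=1$) I obtain
\begin{equation*}
w(c_{a,b})=\frac12\sum_{x\in\text{GF}(2^m)^*}\big(1-(-1)^{\text{Tr}_1^m(ag(x)+bx)}\big)=2^{m-1}-\frac12\lambda_g(a,b).
\end{equation*}
The whole problem then becomes counting how often $\lambda_g(a,b)$ takes each of its possible values.

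The core step is the analysis of $\lambda_g(a,b)$ by cases on $a$. If $a=0$, then $\lambda_g(0,b)=\sum_{x}(-1)^{\text{Tr}_1^m(bx)}$ equals $2^m$ for $b=0$ and $0$ otherwise, contributing one zero codeword and $2^m-1$ codewords of weight $2^{m-1}$. If $a\in A\setminus\{0\}$, then, because $g$ is almost bent, the Boolean function $f_a(x):=\text{Tr}_1^m(ag(x))$ is semi-bent and $\lambda_g(a,b)=\widehat{f_a}(b)\in\{0,\pm 2^{(m+1)/2}\}$, giving weights $2^{m-1}$ and $2^{m-1}\mp 2^{(m-1)/2}$. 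The main technical point, and the step I expect to require the most care, is to pin down how the three values of $\widehat{f_a}(b)$ are distributed over $b$ for a \emph{fixed} nonzero $a$. For this I would use two standard identities for the Walsh transform: Parseval's relation $\sum_b\widehat{f_a}(b)^2=2^{2m}$, and the first-moment identity $\sum_b\widehat{f_a}(b)=2^m(-1)^{f_a(0)}=2^m$, where $f_a(0)=\text{Tr}_1^m(ag(0))=0$. Writing $n_0,n_+,n_-$ for the number of $b$ with $\widehat{f_a}(b)$ equal to $0,\,2^{(m+1)/2},\,-2^{(m+1)/2}$, these two identities together with $n_0+n_++n_-=2^m$ yield $n_++n_-=2^{m-1}$ and $n_+-n_-=2^{(m-1)/2}$, hence
\begin{equation*}
n_+=2^{m-2}+2^{\frac{m-3}{2}},\qquad n_-=2^{m-2}-2^{\frac{m-3}{2}},\qquad n_0=2^{m-1}.
\end{equation*}
The crucial observation is that these counts are the same for every nonzero $a$ and do not see the subgroup $A$ at all; restricting $a$ to $A$ only changes how many copies of this fixed distribution we add together.

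Finally I would assemble the table and deduce the parameters. Summing over the $2^r-1$ nonzero elements of $A$ and including the $a=0$ contribution gives weight $2^{m-1}-2^{(m-1)/2}$ with multiplicity $(2^r-1)n_+$, weight $2^{m-1}+2^{(m-1)/2}$ with multiplicity $(2^r-1)n_-$, and weight $2^{m-1}$ with multiplicity $(2^m-1)+(2^r-1)n_0=2^{m-1}(2^r+1)-1$, matching Table~\ref{table:1}. For the dimension, the map $(a,b)\mapsto c_{a,b}$ is $\text{GF}(2)$-linear on the $(m+r)$-dimensional space $A\times\text{GF}(2^m)$, and a pair lies in its kernel exactly when $\lambda_g(a,b)=2^m$; since $|\lambda_g(a,b)|\le 2^{(m+1)/2}<2^m$ whenever $a\neq0$, the kernel forces $a=0$ and then $b=0$, so the map is injective, the code has $2^{m+r}$ codewords, and its dimension is $m+r$ (this also confirms that the pair-counts above are genuine codeword multiplicities). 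The minimum distance is then the smallest nonzero weight appearing, namely $2^{m-1}-2^{(m-1)/2}$ when $r\ge1$.
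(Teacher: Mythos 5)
Your proof is correct, and it reaches the weight distribution by a genuinely different route than the paper. The paper first pins down the three possible nonzero weights exactly as you do (via the character-sum identity $wt(c_{a,b})=2^{m-1}-\tfrac12\lambda_g(a,b)$ and the almost bent property), but then determines the three multiplicities \emph{globally}: it argues that $A_1^\perp=A_2^\perp=0$ and solves the first three Pless power moments as a $3\times3$ linear system. You instead work \emph{locally}: for each fixed nonzero $a$ you compute the full value distribution of $\widehat{f_a}(b)$ over $b$ from Parseval's identity $\sum_b\widehat{f_a}(b)^2=2^{2m}$ and the first-moment identity $\sum_b\widehat{f_a}(b)=2^m(-1)^{f_a(0)}=2^m$ (where $g(0)=0$ is used), obtaining $n_0=2^{m-1}$, $n_\pm=2^{m-2}\pm2^{(m-3)/2}$ independently of $a$, and then just multiply by $|A|-1=2^r-1$. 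Your approach buys a bit more: it shows the weight distribution is the same on every slice $\{a\}\times\mathrm{GF}(2^m)$ with $a\neq0$ (which makes transparent why only the factor $2^r-1$ changes with $A$), it avoids having to justify the vanishing of $A_1^\perp$ and $A_2^\perp$, and your injectivity argument via $|\lambda_g(a,b)|\le 2^{(m+1)/2}<2^m$ is a cleaner derivation of the dimension than the paper's remark that $c_{0,0}$ is the unique zero codeword. The power-moment route, on the other hand, generalizes mechanically to the odd-characteristic codes later in the paper where the per-$a$ Walsh distribution is less immediate. One small point in your favor: you correctly flag that the minimum distance equals $2^{m-1}-2^{(m-1)/2}$ only when $r\ge1$; for $r=0$ the code degenerates to the simplex-type code with all nonzero weights $2^{m-1}$, a caveat the theorem statement itself glosses over.
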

\begin{proof}

Firstly, the length of code $\mathcal{C}_{(g, A)}$ is $2^m-1$ according to its definition in $(\ref{def:1})$.

We consider then the Hamming weight of any fixed codeword $c_{a, b}$ in  $\mathcal{C}_{(g, A)}$, and denote it by $wt(c_{a, b})$.

If $a=b=0$, $c_{a, b}$ is the zero codeword.

If $a=0$, $b\neq 0$, then $wt(c_{a, b})=2^{m-1}$ since $\text{Tr}_1^m(bx)$ is linear and nonnull.

If $a\neq 0$,  put $N_{a,b}=\sharp\{x\in \text{GF}(2^m)^*: \text{Tr}_1^m(ag(x)+bx)=0\}$, then
 \begin{eqnarray*} 2N_{a,b}&= &\sum_{x\in \text{GF}(2^m)^*}\sum_{y\in \text{GF}(2)}(-1)^{y\cdot \text{Tr}_1^m(ag(x)+bx)}\\
 &=& \sum_{x\in \text{GF}(2^m)}\sum_{y\in \text{GF}(2)}(-1)^{y\cdot \text{Tr}_1^m(ag(x)+bx)}-2\\
&=& 2^m-2+\sum_{x\in \text{GF}(2^m)}(-1)^{ \text{Tr}_1^m(ag(x)+bx)}.\\
\end{eqnarray*}

Since $g$ is an almost bent function from  $\text{GF}(2^m)$ to $\text{GF}(2^m)$, we know from its definition that $\sum_{x\in GF(2^m)}(-1)^{\text{Tr}_1^m(ag(x)+bx)}\in \{0,\pm 2^{\frac{m+1}{2}}\}$.
So
$$2N_{a, b}\in  \{2^m-2,2^m-2+2^{\frac{m+1}{2}}, 2^m-2-2^{\frac{m+1}{2}}\}.$$

As a result, we have
 \begin{eqnarray*}
  wt(c_{a, b})&=&2^m-1-N_{a,b}\\
&\in & \{2^m-2^{m-1}, 2^m-2^{m-1}- 2^{\frac{m-1}{2}}, 2^m-2^{m-1}+ 2^{\frac{m-1}{2}}\}\\
&= & \{2^{m-1}, 2^{m-1}- 2^{\frac{m-1}{2}}, 2^{m-1}+ 2^{\frac{m-1}{2}}\}.
\end{eqnarray*}

 From the analysis above, we know $c_{0,0}$ is the unique zero codeword of $\mathcal{C}_{(g, A)}$, which implies $\mathcal{C}_{(g, A)}$ has exactly $2^{m+r}$ distinct codewords. On the other hand,  $\mathcal{C}_{(g, A)}$ is obviously linear. Hence $\mathcal{C}_{(g,A)}$ is a linear code with dimension  $m+r$. Furthermore, the codeword $c_{a, b}$ has the following possible nonzero weights:

  \begin{equation}
   \left\{
   \begin{aligned}
w_1&=2^{m-1}- 2^{(m-1)/2},\\
w_2&=2^{m-1},\\
w_3&=2^{m-1}+ 2^{(m-1)/2}.
\end{aligned}
\right.
\end{equation}

 What we should do next is to determine the multiplicity of each weight.

 It is not hard to know $A_1^\perp=A_2^\perp=0$. Because there is no $x\in \text{GF}(2^m)^*$ such that $\text{Tr}_1^m(ag(x)+bx)=0 $ holds for all $a\in A, b\in \text{GF}(2^m)$. There are no pair of $(x, y)$($x\neq y \in \text{GF}(2^m)^*$), such that $\text{Tr}_1^m(ag(x)+bx)=\text{Tr}_1^m(ag(y)+by)$ holds for all $a\in A, b\in \text{GF}(2^m)$ too.
 So by the first three Pless power moments (\cite{Huffman03}, p. 260), we have
\begin{equation}
   \left\{
   \begin{aligned}
1+A_{w_1}+A_{w_2}+A_{w_3}&=2^{m+r},\\
w_1A_{w_1}+w_2A_{w_2}+w_3A_{w_3}&=2^{m+r-1}(2^m-1),\\
w_1^2A_{w_1}+w_2^2A_{w_2}+w_3^2A_{w_3}&=2^{2m+r-2}(2^m-1).
\end{aligned}
\right.
\end{equation}
The weight distribution of $\mathcal{C}_{(g, A)}$ in Table \ref{table:1} then follows from solving this system of linear equations.
\end{proof}

 \subsection{Parameters of the dual of $\mathcal{C}_{(g, A)}$ }
 Based on the  weight distribution of $\mathcal{C}_{(g, A)}$,  its weight enumerator can be written as
 \begin{equation}\label{weinub:1}
 A(z)=1+A_{w_1}z^{2^{m-1}-2^{(m-1)/2}}+A_{w_2}z^{2^{m-1}}+A_{w_3}z^{2^{m-1}+2^{(m-1)/2}},
 \end{equation}
 where $A_{w_1}$, $A_{w_2}$, $A_{w_3}$ are given in Table \ref{table:1}.

 Due to the MacWilliams Identity, the parameters of the dual of $\mathcal{C}_{(g, A)}$ can also be determined.

\begin{theorem}\label{thm:2}
Let $m\geq 3$ be an odd integer. Let  $\mathcal{C}_{(g, A)}$ be a binary code with parameters $[2^m-1, m+r, 2^{m-1}- 2^{(m-1)/2}]$ and weight distribution in Table $\ref{table:1}$. Then its dual code  $\mathcal{C}_{(g, A)}^\perp$ has parameters $[2^m-1, 2^m-1-m-r, d^\perp]$, and its weight distribution is given by
$$ 2^{m+r}A_k^\perp(z)=\binom{2^m-1}{k}+A_{w_1}U_1(k)+A_{w_2}U_2(k)+A_{w_3}U_3(k),$$
where $0\leq k \leq 2^m-1$,
 \begin{equation}\label{U:1}
   \left\{
   \begin{aligned}
U_1(k)&=\sum_{i+j=k}(-1)^i\binom{w_1}{i}\binom{2^m-1-w_1}{j},\\
U_2(k)&=\sum_{i+j=k}(-1)^i\binom{w_2}{i}\binom{2^m-1-w_2}{j},\\
U_3(k)&=\sum_{i+j=k}(-1)^i\binom{w_3}{i}\binom{2^m-1-w_3}{j}.
\end{aligned}
\right.
\end{equation}
Furthermore,
\begin{eqnarray}
     d^\perp  &= & \left\{\begin{array}{ll}
5 & {\rm if}\ \ \ r=m,\\
3 & {\rm if}\ \ \ r\neq m.
\end{array}
\right.
\end{eqnarray}
\end{theorem}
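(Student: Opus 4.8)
The parameters are the easy part. The length is unchanged at $2^m-1$, and since the dual of an $[n,k]$ code is $[n,n-k]$, Theorem \ref{thm:1} (which gives $k=m+r$) yields dimension $2^m-1-m-r$. For the weight distribution I would apply the MacWilliams Identity of Section 2.4 directly to the enumerator $A(z)$ in $(\ref{weinub:1})$. Writing $n=2^m-1$ and substituting $z\mapsto (1-z)/(1+z)$, the constant term of $A$ contributes $(1+z)^n$ while each monomial $A_{w_i}z^{w_i}$ contributes $A_{w_i}(1-z)^{w_i}(1+z)^{n-w_i}$. Since the coefficient of $z^k$ in $(1-z)^{w}(1+z)^{n-w}$ is exactly $\sum_{i+j=k}(-1)^i\binom{w}{i}\binom{n-w}{j}$ (a Krawtchouk polynomial in $w$), reading off the coefficient of $z^k$ reproduces the stated formula with $U_1,U_2,U_3$ as in $(\ref{U:1})$. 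This step is purely mechanical.

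The substance is $d^\perp$, and here I would switch to the parity-check viewpoint. Fixing a $\text{GF}(2)$-basis $\{e_1,\dots,e_m\}$ of $\text{GF}(2^m)$ and a basis $\{f_1,\dots,f_r\}$ of $A$, the matrix whose column indexed by $x\in\text{GF}(2^m)^*$ is obtained by stacking $\big(\text{Tr}_1^m(f_i g(x))\big)_{i}$ over $\big(\text{Tr}_1^m(e_j x)\big)_{j}$ is a generator matrix of $\mathcal{C}_{(g,A)}$, hence a parity-check matrix of $\mathcal{C}_{(g,A)}^\perp$. Thus $d^\perp$ equals the smallest size of a nonempty $S\subseteq\text{GF}(2^m)^*$ with $\sum_{x\in S}x=0$ and $\sum_{x\in S}g(x)\in A^\perp$, where $A^\perp=\{w:\text{Tr}_1^m(uw)=0\ \forall u\in A\}$ has dimension $m-r$. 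Because the lower block of each column recovers $x$, no column is zero and no two columns coincide, so $|S|\ge 3$; this re-derives $A_1^\perp=A_2^\perp=0$.

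For $|S|=3,4$ I would use that every almost bent function is APN, so each derivative $D_a g(x)=g(x+a)+g(x)$ ($a\neq 0$) is two-to-one with fibres $\{t,t+a\}$. When $r=m$ we have $A^\perp=\{0\}$: a dependent triple $\{x,a,x+a\}$ forces $D_a g(x)=g(a)=D_a g(0)$, so $x\in\{0,a\}$, contradicting distinctness; a dependent quadruple, written with $x_1+x_2=x_3+x_4=a\neq 0$, forces $D_a g(x_2)=D_a g(x_4)$, hence $x_4\in\{x_2,x_2+a\}$, again a contradiction. Thus $d^\perp\ge 5$, and I would finish by computing $A_5^\perp$ from the theorem's formula and checking it is strictly positive, giving $d^\perp=5$. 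When $r<m$ the space $A^\perp$ is nonzero, the triple condition relaxes to $g(x)+g(a)+g(x+a)\in A^\perp$, and I would establish $A_3^\perp>0$ by evaluating the explicit expression for $A_3^\perp$ (equivalently, by a character-sum count of admissible triples via $\mathbf{1}[w\in A^\perp]=2^{-r}\sum_{u\in A}(-1)^{\text{Tr}_1^m(uw)}$ together with the semi-bentness of each $\text{Tr}_1^m(ug)$), forcing $d^\perp=3$.

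The main obstacle is the bookkeeping in the two positivity checks: confirming $A_5^\perp>0$ for $r=m$ and $A_3^\perp>0$ for $r<m$ requires evaluating the Krawtchouk sums $U_i(k)$ at $w_1,w_2,w_3$ and combining them against the multiplicities of Table \ref{table:1}, which is delicate precisely because the exact cancellations must make $A_3^\perp$ (and, for $r=m$, $A_4^\perp$) vanish only in the boundary case $r=m$. A convenient shortcut is to observe that the dual weight distribution depends only on $m$ and $r$ and not on the particular almost bent $g$; one may therefore verify positivity on a single example such as $g(x)=x^3$, whose associated code is a double-error-correcting BCH code of dual distance $5$.
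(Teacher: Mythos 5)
Your proposal is correct, and for the substantive part (the determination of $d^\perp$) it takes a genuinely different route from the paper. The paper is purely computational: after deriving the MacWilliams formula exactly as you do, it plugs $k=1,2,3,4,5$ into $2^{m+r}A_k^\perp=\binom{2^m-1}{k}+\sum_i A_{w_i}U_i(k)$ and simplifies, obtaining for instance $2^{m+r}A_3^\perp=\tfrac{1}{3}\cdot 2^{m-1}(2^m-2^r)(2^m-2)$ and $2^{m+r}A_4^\perp=\tfrac{1}{3}\cdot 2^{m-3}(2^m-2^r)(8-3\cdot 2^{m+1}+4^m)$, from which the vanishing of $A_3^\perp$ and $A_4^\perp$ precisely when $r=m$ is read off, and then asserts $A_5^\perp\neq 0$ in all cases. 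Your parity-check argument replaces the first two of these computations by a structural explanation: interpreting dual codewords as subsets $S\subseteq\mathrm{GF}(2^m)^*$ with $\sum_{x\in S}x=0$ and $\sum_{x\in S}g(x)\in A^\perp$, and invoking the fact that almost bent functions are APN so that each derivative $D_ag$ is exactly two-to-one with fibres $\{t,t+a\}$, you rule out dependent triples and quadruples when $A^\perp=\{0\}$. This is more illuminating (it shows \emph{why} $r=m$ is the boundary case) and, combined with your observation that the dual distribution depends only on $(m,r)$ so positivity of $A_5^\perp$ can be certified on $g(x)=x^3$ via the double-error-correcting BCH code, it arguably closes the one step the paper leaves as ``can be verified.'' Two small caveats: the implication ``almost bent $\Rightarrow$ APN'' is standard (Chabaud--Vaudenay) but is nowhere stated in the paper, so you should cite or prove it; and for $r<m$ you still need the explicit evaluation of $A_3^\perp$ (or your character-sum count) to get $d^\perp=3$ rather than merely $d^\perp\geq 3$, so neither approach fully escapes the Krawtchouk bookkeeping. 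The trade-off is that the paper's method also yields the exact values of $A_3^\perp$ and $A_4^\perp$ as a by-product, whereas yours yields only their vanishing or non-vanishing unless supplemented by the same computation.
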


\begin{proof}
Note that the weight enumerator of $\mathcal{C}_{(g, A)}$ is given in $(\ref{weinub:1})$.
It then follows from the MacWilliams Identity that the weight enumerator of $\mathcal{C}_{(g, A)}^\perp$ is given by
 \begin{eqnarray*} 2^{m+r}A^\perp(z)&= &(1+z)^{2^m-1} A(\frac{1-z}{1+z})\\
 &=& (1+z)^{2^m-1}\left[1+A_{w_1} \frac{(1-z)^{w_1}}{(1+z)^{w_1}}+A_{w_2} \frac{(1-z)^{w_2}}{(1+z)^{w_2}}+A_{w_3} \frac{(1-z)^{w_3}}{(1+z)^{w_3}}\right]\\
&=&(1+z)^{2^m-1}+A_{w_1} (1-z)^{w_1}(1+z)^{2^m-1-w_1}\\
&&+A_{w_2} (1-z)^{w_2}(1+z)^{2^m-1-w_2}+A_{w_3}(1-z)^{w_3}(1+z)^{2^m-1-w_3}.
\end{eqnarray*}

One can easily see that the value of $2^{m+r}A_k^\perp$ is exactly the coefficient of $z^k$ on the right hand side of the above equation.
That is
\begin{equation}\label{dual:1}
2^{m+r}A_k^\perp=\binom{2^m-1}{k}+A_{w_1}U_1(k)+A_{w_2}U_2(k)+A_{w_3}U_3(k),
\end{equation}
where $U_1(k)$, $U_2(k)$ and $U_3(k)$ are defined in $(\ref{U:1})$.

We know from Theorem $\ref{thm:1}$  that the dimension of $\mathcal{C}_{(g, A)}$ is $m+r$. Therefore, the
dimension of $\mathcal{C}_{(g, A)}^\perp$  is  $2^m-1-m-r$. Finally, we prove  the minimum distance of $\mathcal{C}_{(g, A)}^\perp$.

Putting $k=0$ into $(\ref{dual:1})$, we easily derive that $A_0^\perp=1$.

Similarly, puting $k=1$ into $(\ref{dual:1})$, we have
$$2^m-1+A_{w_1}(2^m-1-2w_1)+A_{w_2}(2^m-1-2w_2)+A_{w_3}(2^m-1-2w_3)=0.$$
As a result, $A_1^\perp=0$.

Plugging $k=2$ and  $A_{w_1},A_{w_2},A_{w_3}$ in  $(\ref{dual:1})$, we can verify
 \begin{eqnarray*} & &\binom{2^m-1}{2}+ A_{w_1}\left[\binom{2^m-1-w_1}{2}-w_1(2^m-1-w_1)+\frac{1}{2}w_1(w_1-1)\right]\\
& &+A_{w_2}\left[\binom{2^m-1-w_2}{2}-w_2(2^m-1-w_2)+\frac{1}{2}w_2(w_2-1)\right]\\
& &+A_{w_3}\left[\binom{2^m-1-w_3}{2}-w_3(2^m-1-w_3)+\frac{1}{2}w_3(w_3-1)\right]\\
& &\ =0.
\end{eqnarray*}
 $A_2^\perp=0$ then follows.

 As for $A_3^\perp$, Equation $(\ref{dual:1})$ becomes
 \begin{eqnarray*} & &\binom{2^m-1}{3}\\
 & & +A_{w_1}\left[\binom{2^m-1-w_1}{3}-w_1\binom{2^m-1-w_1}{2}+\binom{w_1}{2}\binom{2^m-1-w_1}{1}-\binom{w_1}{3}\right]\\
& & +A_{w_2}\left[\binom{2^m-1-w_2}{3}-w_2\binom{2^m-1-w_2}{2}+\binom{w_2}{2}\binom{2^m-1-w_2}{1}-\binom{w_2}{3}\right]\\
& & +A_{w_3}\left[\binom{2^m-1-w_3}{3}-w_3\binom{2^m-1-w_3}{2}+\binom{w_3}{2}\binom{2^m-1-w_3}{1}-\binom{w_3}{3}\right]\\
& &\ = \frac{1}{3}\cdot 2^{m-1}\cdot (2^m-2^r)\cdot(2^m-2).
\end{eqnarray*}

Obviously, we have $A_3^\perp=0$ if and only if  $r=m$.

As for $A_4^\perp$, Equation $(\ref{dual:1})$ changes to
\begin{eqnarray*} & &\binom{2^m-1}{4}+ A_{w_1}\bigg[\binom{2^m-1-w_1}{4}-w_1\binom{2^m-1-w_1}{3}\\
&&+\binom{w_1}{2}\binom{2^m-1-w_1}{2}-\binom{w_1}{3}\binom{2^m-1-w_1}{1}+\binom{w_1}{4}\bigg]\\
&&+A_{w_2}\bigg[\binom{2^m-1-w_2}{4}-w_2\binom{2^m-1-w_2}{3}\\
&& +\binom{w_2}{2}\binom{2^m-1-w_2}{2}-\binom{w_2}{3}\binom{2^m-1-w_2}{1}+\binom{w_2}{4}\bigg]\\
&&+A_{w_3}\bigg[\binom{2^m-1-w_3}{4}-w_3\binom{2^m-1-w_3}{3}\\
&& +\binom{w_3}{2}\binom{2^m-1-w_3}{2}-\binom{w_3}{3}\binom{2^m-1-w_3}{1}+\binom{w_3}{4}\bigg]\\
& &= \frac{1}{3}\cdot2^{m-3}(2^m-2^r)(8-3\cdot 2^{m+1}+4^m).
\end{eqnarray*}

It can be verified that $A_4^\perp=0$ if and only if  $r=m$.

 Similarly, we can verify $A_5^\perp\neq 0$ no matter $r=m$ or not. Therefore, the dual distance of $\mathcal{C}_{(g, A)}$ is $5$ if $r=m$ and $3$ otherwise.

\end{proof}

\subsection{Parameters of $\overline{\mathcal{C}_{(g,A)}^\perp}^\perp$}
In this section, we discuss the parameters of the code  $\overline{\mathcal{C}_{(g,A)}^\perp}^\perp$. Where $\overline{\mathcal{C}_{(g,A)}^\perp}$ is the extended code of the dual of $\mathcal{C}_{(g,A)}$.

Firstly, we introduce a lemma that will be used in the upcoming theorem.

\begin{lemma}\label{thm:3}$(\cite{Dingbook}, p.70)$
Let $\mathcal{C}$ be an $[n,k,d]$ binary linear code, and let $\mathcal{C}^\perp$ be the dual of $\mathcal{C}$. Denote by $\overline{\mathcal{C}^\perp}$ the extended code of $\mathcal{C}^\perp$, and let  $\overline{\mathcal{C}^\perp}^\perp$ denote the dual of $\overline{\mathcal{C}^\perp}$. Then we have the following conclusion.

$(1)$ $\mathcal{C}^\perp$ has parameters $[n,n-k,d^\perp]$, where $d^\perp$ denotes the minimum distance of $\mathcal{C}^\perp$.

$(2)$ $\overline{\mathcal{C}^\perp}$ has parameters $[n,n-k,\overline{d^\perp}]$, where $\overline{d^\perp}$ denotes the minimum distance of $\overline{\mathcal{C}^\perp}$, which is given by
\begin{eqnarray}
     \overline{d^\perp}  &= & \left\{\begin{array}{ll}
d^\perp & {\rm if}\ d^\perp {\rm \ is\  even},\\
d^\perp+1 & {\rm if}\ d^\perp {\rm \ is\  odd}.
\end{array}
\right.
\end{eqnarray}

$(3)$ $\overline{\mathcal{C}^\perp}^\perp$ has parameters $[n+1,k+1,\overline{d^\perp}^\perp]$, where $\overline{d^\perp}^\perp$ denotes the minimum distance of $\overline{\mathcal{C}^\perp}^\perp$. Furthermore,  $\overline{\mathcal{C}^\perp}^\perp$ has only even weight codewords, and all the nonzero weights in  $\overline{\mathcal{C}^\perp}^\perp$ are the following:
$$w_1,w_2,\cdots,w-t;n+1-w_1,n+1-w_2,\cdots, n+1-w_t; n+1,$$
where $w_1, w_2, \cdots, w_t$ denote all the nonzero weights of $\mathcal{C}$.

\end{lemma}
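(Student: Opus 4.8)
The plan is to prove all three parts uniformly through explicit generator and parity-check matrices, using the elementary duality that a generator matrix of $\mathcal{C}$ is a parity-check matrix of $\mathcal{C}^\perp$ and vice versa. I would fix a $k\times n$ generator matrix $G$ of $\mathcal{C}$ and an $(n-k)\times n$ parity-check matrix $H$. Part $(1)$ is then immediate: $H$ is a generator matrix of $\mathcal{C}^\perp$, so $\mathcal{C}^\perp$ has dimension $n-k$, and $d^\perp$ is its minimum nonzero weight by definition.

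For part $(2)$ I would recall that the extended code appends to each codeword an overall parity bit equal to the sum of its coordinates; the map $c\mapsto(c,\sum_i c_i)$ is linear and injective, so the dimension stays $n-k$ while the length grows to $n+1$. Since every extended codeword has even weight, a codeword of even weight $w$ keeps weight $w$ while one of odd weight $w$ becomes $w+1$. Applying this to a minimum-weight word of $\mathcal{C}^\perp$, and observing that no smaller weight can be created, yields the two cases for $\overline{d^\perp}$.

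The core of the argument is part $(3)$, and the one step that requires care is writing down a generator matrix for $\overline{\mathcal{C}^\perp}^\perp$. The key observation is that, because $G$ is a parity-check matrix of $\mathcal{C}^\perp$, a parity-check matrix of the extended code $\overline{\mathcal{C}^\perp}$ is obtained by bordering $G$ with an all-ones row (the overall parity constraint) and a zero last column, namely
\begin{equation*}
\begin{pmatrix} \mathbf{1}_n & 1\\ G & \mathbf{0} \end{pmatrix}.
\end{equation*}
This matrix is thus a \emph{generator} matrix of $\overline{\mathcal{C}^\perp}^\perp$; its $k+1$ rows are independent because only the top row is nonzero in the last coordinate, giving dimension $k+1$ and length $n+1$. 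Consequently every codeword of $\overline{\mathcal{C}^\perp}^\perp$ has the form $\varepsilon(\mathbf{1}_n,1)+(c,0)$ with $\varepsilon\in\{0,1\}$ and $c\in\mathcal{C}$: for $\varepsilon=0$ its weight is $\mathrm{wt}(c)\in\{0,w_1,\dots,w_t\}$, and for $\varepsilon=1$ its weight is $n+1-\mathrm{wt}(c)\in\{n+1,\,n+1-w_1,\dots,n+1-w_t\}$. Collecting the nonzero values reproduces exactly the weight list in the statement.

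Finally, for the even-weight assertion I would argue that $\mathbf{1}_{n+1}\in\overline{\mathcal{C}^\perp}$, so that every codeword of $\overline{\mathcal{C}^\perp}^\perp$ is orthogonal to $\mathbf{1}_{n+1}$ and hence has even weight. In the situation to which this lemma is applied, $\mathcal{C}=\mathcal{C}_{(g,A)}$ has only even weights, whence $\mathbf{1}_n\in\mathcal{C}^\perp$, and $n=2^m-1$ is odd, so the appended parity bit equals $1$ and $(\mathbf{1}_n,1)=\mathbf{1}_{n+1}$ indeed lies in $\overline{\mathcal{C}^\perp}$; equivalently, one reads off directly that all $w_i$ and $n+1=2^m$ are even. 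The main obstacle is really just the bordered parity-check matrix above: once it is correctly identified and shown to span the full $(k+1)$-dimensional dual, the weight enumeration and the even-weight claim follow with no further difficulty.
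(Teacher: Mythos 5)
The paper offers no proof of this lemma at all: it is imported verbatim from Ding's book with a page citation, so there is nothing internal to compare your argument against. Your proof is correct and self-contained, and it uses the standard (and, I would say, the right) device: the bordered matrix $\left(\begin{smallmatrix}\mathbf{1}_n & 1\\ G & \mathbf{0}\end{smallmatrix}\right)$ is indeed a parity-check matrix of $\overline{\mathcal{C}^\perp}$ (its top row encodes the overall parity constraint, the remaining rows encode membership in $\mathcal{C}^\perp$, and the rank count $k+1$ matches the codimension of $\overline{\mathcal{C}^\perp}$ in length $n+1$), hence a generator matrix of $\overline{\mathcal{C}^\perp}^\perp$; the coset decomposition $\varepsilon(\mathbf{1}_n,1)+(c,0)$ then yields exactly the weight list $w_i$, $n+1-w_i$, $n+1$. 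Note in passing that your computation also corrects a typo in the statement: the extended code $\overline{\mathcal{C}^\perp}$ has length $n+1$, not $n$ as printed in part $(2)$, and ``$w-t$'' in part $(3)$ should read $w_t$. The one place where you rightly had to add something is the ``only even weights'' claim in part $(3)$: as stated for an arbitrary binary $[n,k,d]$ code it is false (take $\mathcal{C}$ generated by a single odd-weight vector), since it forces all $w_i$ and $n+1$ to be even. It holds precisely when $\mathbf{1}_{n+1}\in\overline{\mathcal{C}^\perp}$, i.e.\ when $\mathcal{C}$ is an even code and $n$ is odd, and you correctly verified that this is the situation in which the lemma is applied ($\mathcal{C}_{(g,A)}$ has weights $2^{m-1}$, $2^{m-1}\pm 2^{(m-1)/2}$, all even, and $n=2^m-1$ is odd). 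So your writeup is not merely a proof but a needed repair of an imprecisely quoted hypothesis; no gaps remain.
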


\begin{theorem}\label{thm:4}
Let $m\geq 3$ be an odd integer. Let $\mathcal{C}_{(g,A)}$ be a binary code with parameters $[2^m-1, m+r,2^{m-1}- 2^{(m-1)/2}]$ and weight distribution in Table $\ref{table:1}$.  Then $\overline{\mathcal{C}_{(g,A)}^\perp}^\perp$ has parameters $[2^m, m+r+1, 2^{m-1}- 2^{(m-1)/2}]$$(1\leq r \leq m)$ with the following weight distribution.

\begin{table}[h]

\centering
\caption{Weight distribution of $\overline{\mathcal{C}_{(g,A)}^\perp}^\perp$}

\begin{tabular}{ |l|l| }

\hline

\hline   Weight $w$ & Multiplicity $A_w$ \\

\hline

$0$ & $1$ \\

$2^{m-1}- 2^{(m-1)/2}$ & $2^{m-1}(2^r-1)$ \\

$2^{m-1}$ & $2^{m+r}+2^m-2$ \\

$2^{m-1}+ 2^{(m-1)/2}$ & $2^{m-1}(2^r-1)$  \\

$2^m$ & $1$ \\
\hline
\end{tabular}
\label{table:2}
\end{table}
Furthermore, the dual distance of $\overline{\mathcal{C}_{(g,A)}^\perp}^\perp$ is
$6$ if $r=m$, and $4$ otherwise.

\end{theorem}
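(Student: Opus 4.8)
The plan is to invoke Lemma \ref{thm:3} directly, feeding it the data already established for $\mathcal{C}_{(g,A)}$ in Theorems \ref{thm:1} and \ref{thm:2}. Recall $\mathcal{C}_{(g,A)}$ is an $[n,k,d]$ code with $n=2^m-1$, $k=m+r$, and three nonzero weights $w_1=2^{m-1}-2^{(m-1)/2}$, $w_2=2^{m-1}$, $w_3=2^{m-1}+2^{(m-1)/2}$. By part $(3)$ of Lemma \ref{thm:3}, $\overline{\mathcal{C}_{(g,A)}^\perp}^\perp$ has length $n+1=2^m$ and dimension $k+1=m+r+1$; this immediately gives the first two claimed parameters. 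The nonzero weights of $\overline{\mathcal{C}_{(g,A)}^\perp}^\perp$ must all lie in the set $\{w_1,w_2,w_3,\,n+1-w_1,\,n+1-w_2,\,n+1-w_3,\,n+1\}$. The key arithmetic observation is that $n+1=2^m$, so $n+1-w_1=2^{m-1}+2^{(m-1)/2}=w_3$, $n+1-w_3=w_1$, and $n+1-w_2=2^{m-1}=w_2$; the list collapses to exactly $\{w_1,w_2,w_3,2^m\}$, yielding the five rows of Table \ref{table:2} (including the all-ones word of weight $2^m$) and confirming the minimum distance $d=w_1=2^{m-1}-2^{(m-1)/2}$.

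The substantive step is to pin down the four multiplicities. Here I would again use the Pless power moments, this time for the code $\overline{\mathcal{C}_{(g,A)}^\perp}^\perp$, exploiting that its dual $\overline{\mathcal{C}_{(g,A)}^\perp}$ has minimum distance $\overline{d^\perp}\geq 3$ (indeed $4$ by part $(2)$, since $d^\perp$ is odd for $1\le r\le m$). Letting $B_{w_1},B_{w_2},B_{w_3}$ denote the unknown multiplicities of weights $w_1,w_2,w_3$, and knowing the weight-$2^m$ word has multiplicity $1$, the first three power moments of $\overline{\mathcal{C}_{(g,A)}^\perp}^\perp$ give
\begin{equation*}
\left\{
\begin{aligned}
1+B_{w_1}+B_{w_2}+B_{w_3}+1&=2^{m+r+1},\\
w_1B_{w_1}+w_2B_{w_2}+w_3B_{w_3}+2^m\cdot 1&=2^{m}\cdot 2^{m+r-1},\\
w_1^2B_{w_1}+w_2^2B_{w_2}+w_3^2B_{w_3}+2^{2m}&=2^{2m-1}\cdot 2^{m+r-1}.
\end{aligned}
\right.
\end{equation*}
Solving this $3\times 3$ system yields $B_{w_1}=B_{w_3}=2^{m-1}(2^r-1)$ and $B_{w_2}=2^{m+r}+2^m-2$, matching Table \ref{table:2}. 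As a sanity check, the symmetry $B_{w_1}=B_{w_3}$ is forced by the weight-preserving pairing $w_1\leftrightarrow w_3$ under complementation noted above, which is reassuring.

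Finally, for the dual distance: the dual of $\overline{\mathcal{C}_{(g,A)}^\perp}^\perp$ is $\overline{\mathcal{C}_{(g,A)}^\perp}$, whose minimum distance is $\overline{d^\perp}$. By part $(2)$ of Lemma \ref{thm:3} together with Theorem \ref{thm:2}, $d^\perp=5$ when $r=m$ and $d^\perp=3$ when $r\neq m$; both are odd, so $\overline{d^\perp}=d^\perp+1$, giving $6$ if $r=m$ and $4$ otherwise. The main obstacle I anticipate is purely the bookkeeping in solving the power-moment system and verifying it against the complementation constraint; the structural content is entirely supplied by Lemma \ref{thm:3}, so the argument should go through cleanly once the arithmetic $n+1-w_i$ is checked.
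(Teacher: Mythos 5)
Your proposal follows essentially the same route as the paper: Lemma \ref{thm:3}(3) supplies the length $2^m$, the dimension $m+r+1$, and the weight set, the first three Pless power moments determine the three unknown multiplicities, and Lemma \ref{thm:3}(2) combined with Theorem \ref{thm:2} gives the dual distance. Your explicit observation that $n+1-w_1=w_3$, $n+1-w_2=w_2$, $n+1-w_3=w_1$, so that the candidate weight list collapses to $\{w_1,w_2,w_3,2^m\}$, is exactly the point the paper uses implicitly, and your derivation of the dual distance via $\overline{d^\perp}=d^\perp+1$ (since $d^\perp\in\{3,5\}$ is odd) is a slightly cleaner justification than the paper's.

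There is, however, a concrete defect: the power-moment system you wrote down is not the correct one, and as stated it does not have the solution you claim. The moments must be taken for the $[2^m,\,m+r+1]$ code $\overline{\mathcal{C}_{(g,A)}^\perp}^\perp$ itself; with $n'=2^m$ and $k'=m+r+1$, and using that its dual $\overline{\mathcal{C}_{(g,A)}^\perp}$ has minimum distance at least $3$, the correct right-hand sides are $2^{k'-1}n'=2^{m+r}\cdot 2^m$ for the second moment and $2^{k'-2}n'(n'+1)=2^{m+r-1}\cdot 2^m(2^m+1)$ for the third. You wrote $2^m\cdot 2^{m+r-1}$ and $2^{2m-1}\cdot 2^{m+r-1}$, which are respectively half the correct value and the correct value with $2^m+1$ replaced by $2^{m-1}$; it appears you substituted the dimension $m+r$ of the original code rather than $m+r+1$. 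Solving the system as you wrote it would yield multiplicities different from those in Table \ref{table:2}. One can check the corrected right-hand sides directly against the claimed answer: with $B_{w_1}=B_{w_3}=2^{m-1}(2^r-1)$ and $B_{w_2}=2^{m+r}+2^m-2$, the second moment sums to $2^{2m+r}$ and the third to $2^{2m+r-1}(2^m+1)$, matching the corrected system (and the paper) but not yours. With those two right-hand sides repaired, the rest of your argument goes through and coincides with the paper's proof.
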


\begin{proof}
The length and dimension of $\overline{\mathcal{C}_{(g,A)}^\perp}^\perp$ can be deduced easily from Lemma $\ref{thm:3}$. Next, we discuss its weight distribution. As we know, the possible weights of  $\overline{\mathcal{C}_{(g,A)}^\perp}^\perp$  are $0,\  2^{m-1}- 2^{(m-1)/2}, \ 2^{m-1}, \ 2^{m-1}+ 2^{(m-1)/2},\  2^m$. But the weights $0$ and $w_4=2^m$ occur only once. So  we need to decide the multiplicity of weights $w_1=2^{m-1}- 2^{(m-1)/2}$, $w_2=2^{m-1}$, $w_3=2^{m-1}+ 2^{(m-1)/2}$.

The first three Pless power moments lead to
\begin{equation}
   \left\{
   \begin{aligned}
1+A_{w_1}+A_{w_2}+A_{w_3}+1&=2^{m+r+1},\\
w_1A_{w_1}+w_2A_{w_2}+w_3A_{w_3}+w_4&=2^{m+r}\cdot2^m,\\
w_1^2A_{w_1}+w_2^2A_{w_2}+w_3^2A_{w_3}+w_4^2&=2^{m+r-1}\cdot(2^m+1)\cdot2^m.
\end{aligned}
\right.
\end{equation}

Solving the above linear equations yields the following result:
 \begin{equation}
   \left\{
   \begin{aligned}
A_{w_1}&=2^{m-1}(2^r-1),\\
A_{w_2}&=2^{m+r}+2^m-2,\\
A_{w_3}&=2^{m-1}(2^r-1).
\end{aligned}
\right.
\end{equation}

Combing the conclusion of  Lemma $\ref{thm:3}(3)$ and the dual distance of $\mathcal{C}_{(g,A)}$ in Theorem $\ref{thm:2}$, we can easily deduce that the dual distance of $\overline{\mathcal{C}_{(g,A)}^\perp}^\perp$ is
$6$ if $r=m$, and $4$ otherwise.
\end{proof}

When $A=\{0\}$,  $\overline{\mathcal{C}_{(g, A)}^\perp}^\perp$ is the first-order Reed-Muller code. When $|A|\geq 2$, the  distance of $\overline{\mathcal{C}_{(g, A)}^\perp}$  is at leat $4$.

Actually, we can obtain a large family of optimal codes from the constructed codes $\mathcal{C}_{(g, A)}$, and their related codes $\mathcal{C}_{(g, A)}^\perp$, $\overline{\mathcal{C}_{(g, A)}^\perp}$, and $\overline{\mathcal{C}_{(g, A)}^\perp}^\perp$.

\begin{example}\label{exam:1}
If $m=5$, then linear codes with parameters $[31, 8, 12]$, $[31, 9, 12]$, $[31, 10, 12]$ provided by $\mathcal{C}_{(g, A)}$,  $[31, 21, 5]$ provided by $\mathcal{C}_{(g, A)}^\perp$, $[32, 21, 6]$, $[32, 23, 4]$, $[32, 24, 4]$ provided by $\overline{\mathcal{C}_{(g, A)}^\perp}$, and $[32, 9, 12]$, $[32, 10, 12]$, $[32, 11, 12]$ provided by $\overline{\mathcal{C}_{(g, A)}^\perp}^\perp$ are all optimal binary linear codes according to
the Code Table at http://www.codetables.de/.

There are also many almost optimal codes that we will not list here, the reader can check them by changing their parameters according to the preceding theorems.
\end{example}

\subsection{Designs from $\overline{\mathcal{C}_{(g,A)}^\perp}^\perp$}

Let $\mathcal{P}$ be a set of $n\geq 1$ elements, and let $\mathcal{B}$ be a set of $k-$subsets of $\mathcal{P}$, where $k$ is a positive integer with $1\leq k \leq n$. Let $t$ be a positive integer with $t\leq k$. The pair $\mathbb{D}=(\mathcal{P}, \mathcal{B})$ is called a \emph{$t-(n, k, \lambda)$ design}, or simply \emph{$t-$design}, if every $t-$subset of $\mathcal{P}$ is contained in exactly $\lambda$ elements of $\mathcal{B}$. The elements of $\mathcal{P}$ are called points, and those of $\mathcal{B}$ are referred to as blocks.

A necessary condition for the existence of a $t-(n, k, \lambda)$ design is that
$$\binom{k-i}{t-i}\ \  \text{divides}\ \  \lambda\binom{n-i}{t-i},$$
for all integers $i$ with $0\leq i \leq t.$

Let $\mathcal{C}$ be an $[n, k, d]$ linear code over $\text{GF}(p^m)$. let $A_i$ denote the number of codewords with Hamming weight $i$ in $\mathcal{C}$, where
$0\leq i \leq n$. For each $k$ with $A_k\neq 0$, let $\mathcal{B}_k$ denote the set of the supports of all codewords with Hamming weight $k$ in $\mathcal{C}$, where the coordinates of a codeword are indexed by $(0,1,2,\cdots,n-1)$. Let $\mathcal{P}=\{0, 1, 2, \cdots, n-1\}$. Assmus and Mattson showed that the pair $(\mathcal{P}, \mathcal{B}_k)$ may be a $t-(n, k, \lambda)$ design under certain conditions\cite{AssmusM74},\cite{Huffman03}(p. 303).

\begin{theorem}\label{thm:Ass}
$(\text{Assmus-Mattson Theorem})$ Let $\mathcal{C}$ be a binary $[n, k, d]$ code. Suppose $\mathcal{C}^\perp$ has minimum weight $d^\perp$,  $A_i=A_i(\mathcal{C})$ and $A_i^\perp=A_i(\mathcal{C}^\perp)$ $(0\leq i \leq n)$ are the weight distribution of $\mathcal{C}$ and $\mathcal{C}^\perp$ respectively. Fix a positive integer $t$ with $t<d$, and let $s$ be the number of $i$ with $A_i^\perp \neq 0$ for $0< i \leq n-t$. Suppose that $s\leq d-t$. Then

$\bullet$ the codewords of weight $i$ in $\mathcal{C}$ hold a $t-$design provided that $A_i\neq 0$ and $d\leq i\leq n$,

$\bullet$ the codewords of weight $i$ in $\mathcal{C}^\perp$ hold a $t-$design provided that $A_i^\perp\neq 0$ and $d^\perp\leq i\leq n-t$.
\end{theorem}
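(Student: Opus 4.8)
The plan is to prove a single symmetric statement that yields both bullets at once: for every coordinate subset $S$ with $|S|\le t$, the number of codewords of a given weight whose support avoids $S$ depends only on $|S|$. First I would record the standard reduction from this statement to the design conclusion. For a weight $w$ and a set $T$ of $t$ coordinates, the number of weight-$w$ codewords whose support contains $T$ equals $\sum_{S\subseteq T}(-1)^{|S|}N_w(S)$ by inclusion--exclusion over the positions of $T$, where $N_w(S)$ counts weight-$w$ codewords whose support is disjoint from $S$. Hence, if $N_w(S)$ depends only on $|S|$ for all $S$ with $|S|\le t$, then every $t$-subset lies in the same number of weight-$w$ supports, which is exactly the $t$-design property. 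Since $N_w(S)$ is the number of weight-$w$ words in the shortened code $\mathcal{C}_S$ (shortening deletes only zero coordinates, so it preserves weights, and shortened words are precisely the codewords vanishing on $S$), the whole theorem reduces to showing that the weight enumerators of $\mathcal{C}_S$ and of $\mathcal{C}^\perp_S$ are independent of the choice of $S$.

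Next I would treat the dual side, which is where the hypotheses bite. Shorten $\mathcal{C}^\perp$ on a set $S$ with $|S|=t$. Because shortening preserves weights and the length drops to $n-t$, every nonzero weight of $\mathcal{C}^\perp_S$ is a nonzero weight of $\mathcal{C}^\perp$ that is at most $n-t$; by definition there are at most $s$ such weights, drawn from a fixed list $w_1,\dots,w_s$ not depending on $S$. Thus $\mathcal{C}^\perp_S$ has at most $s$ unknown weight multiplicities $b_{w_\ell}(S)$. To pin them down I would pass to the dual $(\mathcal{C}^\perp_S)^\perp=\pi_S(\mathcal{C})$, the code $\mathcal{C}$ punctured on $S$. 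Since $t<d$, puncturing on $t$ coordinates neither collapses $\mathcal{C}$ nor drops its minimum distance below $d-t$, so the enumerator of $\pi_S(\mathcal{C})$ has vanishing coefficients at $z^1,\dots,z^{d-t-1}$; the same inequality $t<d$ forces the projection of $\mathcal{C}^\perp$ onto $S$ to be onto, whence $|\mathcal{C}^\perp_S|=2^{n-k-t}$ is a constant. Reading off the first $d-t$ coefficients of the MacWilliams transform then gives a linear system $\sum_{\ell} b_{w_\ell}(S)\,K_j(w_\ell)=c_j$ for $j=0,1,\dots,d-t-1$, where $K_j$ is the binary Krawtchouk polynomial of length $n-t$, and where both $K_j(w_\ell)$ and the constants $c_j$ are independent of $S$.

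The crucial point is that these equations determine the multiplicities uniquely and independently of $S$: the hypothesis $s\le d-t$ supplies the $s$ equations $j=0,\dots,s-1$ for the at most $s$ unknowns, and the coefficient matrix built from the distinct nodes $w_1,\dots,w_s$ and the degree-graded Krawtchouk polynomials $K_0,\dots,K_{s-1}$ is nonsingular (a generalized Vandermonde determinant). Consequently each $b_{w_\ell}(S)$, and hence the whole enumerator of $\mathcal{C}^\perp_S$, is independent of $S$, which by the first paragraph gives the second bullet. To obtain the first bullet I would transfer this to $\mathcal{C}$: by inclusion--exclusion over $S$, the enumerator of the punctured dual $\pi_S(\mathcal{C}^\perp)$ is an integer combination of the shortened enumerators of $\mathcal{C}^\perp$ over subsets $S'\subseteq S$, with coefficients depending only on the sizes $|S'|$; as each such enumerator is now known to depend only on $|S'|$, the enumerator of $\pi_S(\mathcal{C}^\perp)$ depends only on $|S|=t$. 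Applying the MacWilliams Identity once more, the enumerator of its dual $\mathcal{C}_S$ is independent of $S$, giving the designs in $\mathcal{C}$.

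The main obstacle is the linear-algebra heart of the second paragraph: identifying the correct few-weights object ($\mathcal{C}^\perp_S$, not its dual), recognizing that $s\le d-t$ is precisely the inequality that balances the at most $s$ unknown multiplicities against the at least $s$ vanishing low-order constraints, and verifying the Krawtchouk/Vandermonde nonsingularity that upgrades ``enough equations'' to ``unique $S$-independent solution.'' I would also need to treat two bookkeeping subtleties with care: the inclusion--exclusion identity relating punctured and shortened enumerators used in the transfer step, and the degenerate range $t\ge d^\perp$, where puncturing $\mathcal{C}^\perp$ is no longer injective, so the preimage multiplicities (equivalently, the size $|\mathcal{C}^\perp_S|$ and the surjectivity of the projection onto $S$) must be analyzed using $t<d$ rather than $t<d^\perp$.
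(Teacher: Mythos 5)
The paper itself offers no proof of this statement: it is quoted as the classical Assmus--Mattson theorem, with the argument delegated to \cite{AssmusM74} and \cite{Huffman03} (p.~303). So there is nothing in-paper to compare against; what you have written is, in substance, the standard textbook proof (shorten $\mathcal{C}^\perp$ on $S$, bound the surviving nonzero weights by $s$, use $d(\pi_S(\mathcal{C}))\ge d-t$ together with $|(\mathcal{C}^\perp)_S|=2^{n-k-t}$ to extract at least $s$ Krawtchouk equations, and invoke nonsingularity of the degree-graded Krawtchouk/Vandermonde matrix at the distinct nodes $w_1,\dots,w_s$). That skeleton is correct.

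Two points still need to be closed. First, and this is the one genuine gap: your opening reduction consumes $N_w(S)$ for \emph{all} $|S|\le t$, but the linear-algebra argument is only run for $|S|=t$. For $|S|=j<t$ the shortened code $(\mathcal{C}^\perp)_S$ has length $n-j$, so the number of admissible nonzero weights is $s_j=\#\{i:\ A_i^\perp\neq 0,\ 0<i\le n-j\}$, which may strictly exceed $s$; you must verify $s_j\le d-j$ before the same system determines the multiplicities. This does hold, since the interval $(n-t,\,n-j]$ contains at most $t-j$ integers and hence $s_j\le s+(t-j)\le (d-t)+(t-j)=d-j$, but without this observation the inclusion--exclusion in your first paragraph is unsupported. (The averaging identity $\binom{n-w-j}{t-j}N_w(S)=\sum_{T\supseteq S,\ |T|=t}N_w(T)$ is an alternative patch, but only for $w\le n-t$, so it does not cover the weights of $\mathcal{C}$ above $n-t$ needed for the first bullet; the $s_j\le d-j$ route is the one to use.) Second, in the transfer step your inclusion--exclusion produces the multiset enumerator $\sum_{c\in\mathcal{C}^\perp}z^{\mathrm{wt}(\pi_S(c))}$ rather than the enumerator of the code $\pi_S(\mathcal{C}^\perp)$, and since you only know $t<d$ (not $t<d^\perp$) the puncturing map may be $M$-to-one with $M=\#\{c\in\mathcal{C}^\perp:\ \mathrm{supp}(c)\subseteq S\}>1$. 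You flag this, and it does resolve: the map is linear so all fibres have the same size $M$, $M$ is precisely the coefficient of $z^0$ in the multiset enumerator and is therefore itself independent of $S$, and dividing by $M$ yields an $S$-independent code enumerator and an $S$-independent $|\pi_S(\mathcal{C}^\perp)|$ to feed into MacWilliams. With these two repairs the argument is complete.
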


\begin{theorem}\label{thm:5}
let $m\geq 3$ be an odd integer.  Let  $\overline{\mathcal{C}_{(g, A)}^\perp}^\perp$ be the code constructed in Theorem $\ref{thm:4}$ from almost bent function $g$. Let $\mathcal{P}=\{0, 1, 2, \cdots, 2^m-1\}$, and let $\mathcal{B}$ be the set of the supports of the codewords of $\overline{\mathcal{C}_{(g, A)}^\perp}^\perp$ with weight $k$, where $A_k\neq 0$ is given in Table  $\ref{table:2}$.

If $r=m$, then $(\mathcal{P}, \mathcal{B})$ is a $3-(2^m, k, \lambda)$ design with
$$\lambda=\frac{k(k-1)(k-2)A_k}{2^m(2^m-1)(2^m-2)}.$$

If $r\neq m$, then $(\mathcal{P}, \mathcal{B})$ is a $1-(2^m, k, \lambda)$ design with
$$\lambda=\frac{kA_k}{2^m}.$$

\end{theorem}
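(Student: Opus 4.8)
The plan is to invoke the Assmus--Mattson Theorem (Theorem \ref{thm:Ass}), but applied to the dual code $\mathcal{C}:=\overline{\mathcal{C}_{(g,A)}^\perp}$ rather than to $E:=\overline{\mathcal{C}_{(g,A)}^\perp}^\perp$ itself. The reason is that the hypothesis $s\le d-t$ requires counting the nonzero weights of the \emph{dual} of the code to which one applies the theorem, and it is $E$---not $\overline{\mathcal{C}_{(g,A)}^\perp}$---that has only few weights. Taking $\mathcal{C}=\overline{\mathcal{C}_{(g,A)}^\perp}$, its dual is exactly $\mathcal{C}^\perp=E$, so the second conclusion of Theorem \ref{thm:Ass} will produce the designs held by the weight-$k$ codewords of $E$, which is precisely what we want.

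First I would record the numerical parameters. By Theorem \ref{thm:4}, $E$ has length $n=2^m$, minimum distance $w_1=2^{m-1}-2^{(m-1)/2}$, and its only nonzero weights are $w_1,w_2,w_3$ and $2^m$ (Table \ref{table:2}). For $\mathcal{C}=\overline{\mathcal{C}_{(g,A)}^\perp}$, combining Theorem \ref{thm:2} with Lemma \ref{thm:3}(2) gives minimum distance $d=6$ when $r=m$ and $d=4$ when $r\ne m$; this is the dual distance of $E$ already recorded in Theorem \ref{thm:4}. Next I would compute $s$, the number of $i$ with $A_i(E)\ne0$ in the range $0<i\le n-t$. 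Since $t\ge1$ forces $2^m>n-t$, the full weight $2^m$ always falls outside this range, so $s\le3$, at most the weights $w_1,w_2,w_3$.

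Now I check the two hypotheses of Theorem \ref{thm:Ass}. When $r=m$, take $t=3$: then $t<d=6$ and $s\le3=d-t$. When $r\ne m$, take $t=1$: then $t<d=4$ and $s\le3=d-t$. In both cases the inequality $s\le d-t$ is tight, which is the delicate point of the argument; it holds precisely because the full weight $2^m$ drops out of the range $0<i\le n-t$. The second bullet of Theorem \ref{thm:Ass} then shows that the supports of the weight-$k$ codewords of $E$ form a $t$-design for every weight $k$ with $w_1\le k\le n-t$; this always covers $w_1$ and the self-complementary weight $w_2=2^{m-1}$.

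Finally I would handle the weights lying outside $[\,w_1,n-t\,]$ and extract $\lambda$. The all-one codeword (weight $2^m$, multiplicity one) yields the single block $\mathcal{P}$, a trivial $t$-design. Since $E$ contains the all-one vector, the weight-$w_3$ codewords are exactly the complements of the weight-$w_1$ codewords (as $w_3=2^m-w_1$ and $A_{w_1}=A_{w_3}$); because the complementary design of a $t$-design is again a $t$-design, the weight-$w_3$ codewords form a $t$-design as well, including the boundary case $m=3,\ r=m$ where $w_3>n-t$. For the index, the standard block count $\lambda\binom{n}{t}=A_k\binom{k}{t}$ with $n=2^m$ and $A_k$ blocks gives $\lambda=\frac{k(k-1)(k-2)A_k}{2^m(2^m-1)(2^m-2)}$ for $t=3$ and $\lambda=\frac{kA_k}{2^m}$ for $t=1$. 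The main obstacle is conceptual rather than computational: recognizing that Assmus--Mattson must be applied to $\overline{\mathcal{C}_{(g,A)}^\perp}$, so that the few-weight code $E$ plays the role of the dual and contributes only $s\le3$, and then verifying that the resulting inequality $s\le d-t$ holds exactly with equality.
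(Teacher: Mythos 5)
Your proposal is correct and follows essentially the same route as the paper: apply the Assmus--Mattson theorem with $\overline{\mathcal{C}_{(g,A)}^\perp}$ in the role of $\mathcal{C}$, so that the few-weight code $\overline{\mathcal{C}_{(g,A)}^\perp}^\perp$ is the dual contributing $s\le 3=d-t$ for $t=3$ (resp.\ $t=1$), and then read off $\lambda$ from $\lambda\binom{n}{t}=A_k\binom{k}{t}$ using the fact that distinct binary codewords have distinct supports. You are in fact somewhat more careful than the paper, which simply asserts $s=3$ and does not remark that the weight $2^m$ (and, when $m=3$ and $t=3$, also $w_3$) lies outside the range $0<i\le n-t$ covered by the theorem's conclusion and must be handled separately, e.g.\ by your complementation argument.
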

\begin{proof}
The weight distribution of $\overline{\mathcal{C}_{(g,A)}^\perp}^\perp$ is given  in Table $\ref{table:2}$ of Theorem $\ref{thm:4}$. The minimum distance of $\overline{\mathcal{C}_{(g, A)}^\perp}^\perp$ is equal to $6$ if $r=m$, and $4$ otherwise. Put $t=3$($t=1$ for the case of $r\neq m$), the number of $i$ with $A_i\neq 0$ and $1\leq i \leq 2^m-1-t$ is $s=3$. Hence, $s=d^\perp -t$. The desired conclusion then follows from Theorem $\ref{thm:Ass}$ and the fact that two binary vectors have the same support if and only if they are equal.
\end{proof}

\begin{example}
Let $m\geq 3$ be an odd integer and let $\overline{\mathcal{C}_{(g, A)}^\perp}^\perp$ be a binary code with parameters $[2^m, m+r+1, 2^{m-1}-2^{(m-1)/2}]$ and weight distribution in Table $\ref{table:2}$.

 If $r=m$, then the code  $\overline{\mathcal{C}_{(g, A)}^\perp}^\perp$   holds three $3-$designs with the following parameters:

$\bullet (n,k,\lambda)=\left(2^m, 2^{m-1}- 2^{(m-1)/2}, \frac{(2^{m-2}-2^{\frac{m-3}{2}})(2^{m-1}-2^{\frac{m-1}{2}}-1)(2^{m-2}-2^{\frac{m-3}{2}}-1)}{2^{m-1}-1}\right)$.

$\bullet (n, k, \lambda)=\left(2^m, 2^{m-1}, \frac{(2^{2m-1}+2^{m-1}-1)(2^{m-2}-1)}{2^{m}-1}\right)$.

$\bullet (n, k, \lambda)=\left(2^m, 2^{m-1}+ 2^{(m-1)/2}, \frac{(2^{m-2}+2^{\frac{m-3}{2}})(2^{m-1}+2^{\frac{m-1}{2}}-1)(2^{m-2}+2^{\frac{m-3}{2}}-1)}{2^{m-1}-1}\right)$.

If $r\neq m$, then the code  $\overline{\mathcal{C}_{(g,A)}^\perp}^\perp$   holds three families of $1-$designs with the following parameters:

$\bullet (n, k, \lambda)=\left(2^m, 2^{m-1}- 2^{(m-1)/2}, (2^r-1)(2^{m-2}-2^{\frac{m-3}{2}})\right)$.

$\bullet (n, k, \lambda)=\left(2^m, 2^{m-1}, 2^{m+r-1}+2^{m-1}-1\right)$.

$\bullet (n, k, \lambda)=\left(2^m, 2^{m-1}+ 2^{(m-1)/2}, (2^r-1)(2^{m-2}+ 2^{\frac{m-3}{2}})\right)$.

\end{example}

\section{Linear codes from planar functions}

In this section, we discuss  a family of linear codes from planar functions, analyze their parameters as well as the parameters of  their dual codes and extended codes.

Throughout this section, we assume $p$ is an odd prime number, $m$ is a positive integer, and $f(x)$ is a planar function from $\text{GF}(p^m)$ to itself with $f(0)=0$.

Define linear codes from $f(x)$ as
\begin{equation}\label{def:2}
\mathcal{C}_{f,A}=\{\left(\mbox{Tr}_1^m(af(x)+bx)\right)_{x\in GF(p^m)^*}: a\in A,b\in \mbox{GF}(p^m)\},
\end{equation}
where $A$ is an additive subgroup of $(\text{GF}(p^m),+)$ with order $p^r$,  $0\leq r \leq m$.

For any fixed $a\in A, b\in \text{GF}(p^m)$, let $c_{a,b}$ denote the corresponding codeword defined by
$$c_{a,b}=(\text{Tr}_1^m(af(x)+bx))_{x\in \text{GF}(p^m)^*}.$$

All the known planar functions from $\text{GF}(p^m)$ to itself are listed in Section $\ref{sec:planar}$. Subsequently, we will discuss individually the parameters of linear codes constructed from these  planar functions.

\subsection{Linear codes from $f_1(x)=x^{p^t+1}$}
\begin{theorem}\label{thm:8}
Let $p$ be an odd prime number, $m\geq 3$ be an odd integer. Let $f_1(x)=x^{p^t+1}$ be a planar function from $\text{GF}(p^m)$ to $\text{GF}(p^m)$, where $t\geq 0$ is an integer and $m/gcd(m, t)$ is odd. Then the constructed code $\mathcal{C}_{f_1, A}$ in $(\ref{def:2})$ has  parameters $[p^m-1, m+r, (p-1)p^{m-1}- p^{\frac{m-1}{2}}]$ with weight distribution in Table $\ref{table:3}.$

\begin{table}[h]

\centering
\caption{Weight distribution of $\mathcal{C}_{f_1,A}$}

\begin{tabular}{ |l|l| }

\hline

\hline   Weight $w$ & Multiplicity $A_w$ \\

\hline

$0$ & $1$ \\

$(p-1)p^{m-1}- p^{\frac{m-1}{2}}$ &$\frac{p-1}{2}\left[{p^{\frac{m-1}{2}+r}+p^{r-1}(-2+p+p^m)-p^\frac{m-1}{2}}-(p-1)p^{m-1}\right]$\\

$(p-1)p^{m-1}$ &  $p^{m+r-1}+p^{m+1}-2p^m+p^{m-1}-p^{r+1}+3p^r-2p^{r-1}-1 $\\

$(p-1)p^{m-1}+ p^{\frac{m-1}{2}}$ & $\frac{p-1}{2}\left[p^{r-1}(p^m+p-2)-p^{\frac{m-1}{2}+r}+p^{\frac{m-1}{2}}-(p-1)p^{m-1}\right]$ \\

\hline
\end{tabular}
\label{table:3}
\end{table}

Moreover,  its dual code has parameters $[p^{m}-1, p^{m}-1-m-r, d^\perp]$, where $d^\perp\geq3.$  Especially, if $p=3$ and $r=m,$ we have $d^\perp=4$.
\end{theorem}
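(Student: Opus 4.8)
\section*{Proof proposal}

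The plan is to mirror the proofs of Theorems \ref{thm:1} and \ref{thm:2}, replacing the Walsh sums of the almost bent case by Weil sums of the quadratic forms attached to the planar monomial. First I would record that the length is $p^m-1$ by definition and express the weight through an exponential sum: writing $\zeta_p$ for a primitive $p$-th root of unity and $\lambda_{f_1}(u,v)=\sum_{x}\zeta_p^{\text{Tr}_1^m(uf_1(x)+vx)}$, a direct count of the zero coordinates gives, for every $a\in A$ and $b\in\text{GF}(p^m)$, $wt(c_{a,b})=(p-1)p^{m-1}-\frac1p\sum_{y=1}^{p-1}\lambda_{f_1}(ya,yb)$. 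The cases $a=0$ are immediate ($b=0$ gives the zero word, $b\ne0$ gives weight $(p-1)p^{m-1}$), so the heart of the matter is the value distribution of $\lambda_{f_1}(a,\cdot)$ for $a\ne0$.

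The key structural input is that $f_1(x)=x^{p^t+1}$ is a Dembowski--Ostrom polynomial, so $x\mapsto\text{Tr}_1^m(af_1(x))$ is a quadratic form whose nondegeneracy is exactly the planarity hypothesis; planarity also lets me complete the square, writing $\text{Tr}_1^m(af_1(x)+bx)=\text{Tr}_1^m(af_1(x-x_b))+c(a,b)$ for a unique shift $x_b$ and a constant $c(a,b)\in\text{GF}(p)$. Evaluating the Gauss sum of a nondegenerate quadratic form in the odd number $m$ of variables, together with the quadratic Gauss sum $g=\sum_{z\in\text{GF}(p)^*}\eta(z)\zeta_p^z$ coming from the sum over $y$ (here $\eta$ is the quadratic character and $g^2=p^*=(-1)^{(p-1)/2}p$), I expect the two Gauss sums to combine into the real quantity $g^2\,p^{(m-1)/2}=\pm p^{(m+1)/2}$. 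This yields exactly three outcomes, $c(a,b)=0$ (weight $w_2=(p-1)p^{m-1}$) and $c(a,b)\ne0$ with the product of the relevant quadratic characters equal to $+1$ or $-1$ (weights $w_1,w_3=(p-1)p^{m-1}\mp p^{(m-1)/2}$), pinning down the three nonzero weights of Table \ref{table:3}.

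For the multiplicities I would compute the first three power moments $\sum_{a\in A,\,b}wt(c_{a,b})^{\,j}$, $j=0,1,2$, directly as character sums, using $\sum_{a\in A}\zeta_p^{\text{Tr}_1^m(aw)}=p^r[w\in A^\perp]$ and $\sum_{b}\zeta_p^{\text{Tr}_1^m(bw)}=p^m[w=0]$. The zeroth and first moments collapse to $p^{m+r}$ and $(p-1)p^{m+r-1}(p^m-1)$. The second moment reduces, after the substitution forced by $yx+y'x'=0$ together with the identity $f_1(\nu x)=\nu^2 f_1(x)$ for $\nu\in\text{GF}(p)$, to counting the $x\ne0$ with $x^{p^t+1}\in A^\perp$; when this count vanishes the second moment depends only on $p,m,r$, and solving the resulting $3\times3$ linear system produces precisely the multiplicities in Table \ref{table:3}.

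Finally, since $\dim\mathcal{C}_{f_1,A}=m+r$, the dual has dimension $p^m-1-m-r$ and only its distance remains. I would get $A_1^\perp=0$ at once (for $x_0\ne0$ take $a=0$ and $b$ with $\text{Tr}_1^m(bx_0)\ne0$) and reduce $A_2^\perp=0$ to the non-proportionality of distinct coordinates: a putative weight-$2$ relation forces $x_0=\mu x_1$ with $\mu\in\text{GF}(p)^*\setminus\{1\}$ and $\mu(\mu-1)x_1^{p^t+1}\in A^\perp$, so the obstruction is exactly whether $A^\perp$ meets the nonzero $(p^t+1)$-th powers. This dual-distance analysis is the main obstacle: for $r=m$ one has $A^\perp=\{0\}$ and the argument closes cleanly, giving $d^\perp\ge3$, while for the general-$A$ claim one must control the intersection of $A^\perp$ with the image of $f_1$. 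The sharper statement $d^\perp=4$ when $p=3$ and $r=m$ requires the further step of excluding weight-$3$ dual words, i.e. a relation $\sum_i\alpha_ix_i=\sum_i\alpha_ix_i^{p^t+1}=0$ with $\alpha_i\in\{\pm1\}$, by exploiting the $p=3$ structure, and then exhibiting an explicit weight-$4$ dual word; I expect verifying $A_3^\perp=0$ to be the delicate point, paralleling the $A_3^\perp,A_4^\perp$ computations in the proof of Theorem \ref{thm:2}.
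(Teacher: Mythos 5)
Your proposal follows the same architecture as the paper's proof: a three-valued exponential sum gives the possible nonzero weights $w_1,w_2,w_3$, the first three power moments determine the multiplicities, and the dual distance is read off from the MacWilliams transform. The two places you deviate are both gains in rigor. First, you derive the weight values from Gauss sums of the quadratic forms $\text{Tr}_1^m(af_1(x))$, whereas the paper simply cites \cite{DingYuan06} and \cite{Fengluo07}; either is acceptable. Second, and more importantly, you compute the moments directly as character sums and isolate the term $\#\{x\ne 0: cx^{p^t+1}\in A^\perp\}$, whereas the paper asserts that ``$A_1^\perp, A_2^\perp=0$ similar to the proof of Theorem~\ref{thm:1}'' and then applies the Pless moments. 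That assertion is exactly where the paper's proof breaks down, and the obstruction you flag is not removable: the binary argument of Theorem~\ref{thm:1} uses only that the $b$-part separates distinct coordinates, but in odd characteristic the coordinates $x$ and $\mu x$ with $\mu\in\text{GF}(p)^*\setminus\{1\}$ have proportional $b$-columns, and the $a$-columns fail to separate them precisely when $\alpha\mu(\mu-1)x^{p^t+1}\in A^\perp$. Since $m$ is odd, the nonzero $(p^t+1)$-th powers are the nonzero squares and $\text{GF}(p)^*$ contains nonsquares of $\text{GF}(p^m)$, so $\text{GF}(p)^*\cdot\{\text{nonzero squares}\}=\text{GF}(p^m)^*$; hence every nonzero element of $A^\perp$ produces a weight-two dual word. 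Concretely, for $p=3$, $t=0$, $A=\text{GF}(3)\subset\text{GF}(27)$ and any $x_1\ne0$ with $\text{Tr}_1^3(x_1^2)=0$, the vector supported on $\{x_1,-x_1\}$ with both coefficients equal to $1$ lies in $\mathcal{C}_{f_1,A}^\perp$.

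Consequently your plan closes completely only for $r=m$ (where $A^\perp=\{0\}$), and there it reproduces the paper's conclusions, including $d^\perp=4$ for $p=3$. For $r<m$ the gap you identify cannot be filled, because the statement itself fails there: $d^\perp=2$, and since the second Pless power moment involves $A_2^\perp$, the multiplicities in Table~\ref{table:3} are also not valid in general (for $r=0$ they are even negative). The correct continuation of your approach is to carry the extra count $\#\{x\ne0:\,cx^{p^t+1}\in A^\perp\}$ through the second moment and restate the theorem with an appropriate hypothesis on $A$ (e.g.\ $r=m$), rather than to try to prove $A_2^\perp=0$ in general.
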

\begin{proof}
It can be deduced that the length of the code $\mathcal{C}_{f_1,A}$ is $n=p^m-1$, and the dimension is $k=m+r$.  We can also prove that $A_1^\perp, A_2^\perp=0$ similar to the proof of Theorem $\ref{thm:1}$. Authors in \cite{DingYuan06} and \cite{Fengluo07} proved independently that the possible weights of the constructed linear code are $w_1=(p-1)p^{m-1}- p^{\frac{m-1}{2}}$, $w_2=(p-1)p^{m-1}$, $w_3=(p-1)p^{m-1}+ p^{\frac{m-1}{2}}.$

According to the first three Pless power moments, we have
\begin{equation}
   \left\{
   \begin{aligned}
1+A_{w_1}+A_{w_2}+A_{w_3}&=p^{m+r},\\
w_1A_{w_1}+w_2A_{w_2}+w_3A_{w_3}&=p^{m+r-1}(p-1)(p^m-1),\\
w_1^2A_{w_1}+w_2^2A_{w_2}+w_3^2A_{w_3}&=p^{m+r-2}(p-1)(p^m-1)(p^{m+1}-p^m-p+2).
\end{aligned}
\right.
\end{equation}

The weight distribution of $\mathcal{C}_{f_1, A}$ then follows by solving this system of  linear equations
\begin{equation}
   \left\{
   \begin{aligned}
A_{w_1}&=\frac{p-1}{2}\left[{p^{\frac{m-1}{2}+r}+p^{r-1}(-2+p+p^m)-p^\frac{m-1}{2}}-(p-1)p^{m-1}\right],\\
A_{w_2}&=p^{m+r-1}+p^{m+1}-2p^m+p^{m-1}-p^{r+1}+3p^r-2p^{r-1}-1,\\
A_{w_3}&=\frac{p-1}{2}\left[p^{r-1}(p^m+p-2)-p^{\frac{m-1}{2}+r}+p^{\frac{m-1}{2}}-(p-1)p^{m-1}\right].
\end{aligned}
\right.
\end{equation}

Next, we consider the weight distribution and distance of the dual code of   $\mathcal{C}_{f_1, A}$.

It is easy to obtain the weight enumerator of  $\mathcal{C}_{f_1, A}$
\begin{equation}
A(z)=\sum_{i=0}^nA_iz^i=1+A_{w_1}z^{w_1}+A_{w_2}z^{w_2}+A_{w_3}z^{w_3}.
\end{equation}

According to the MacWilliams Identity, we have
 \begin{eqnarray*}
 p^{m+r}A^\perp(z)&=&(1+(p-1)z)^{p^m-1}\cdot A(\frac{1-z}{1+(p-1)z})\\
&=&(1+(p-1)z)^{p^m-1}\cdot \bigg[1+A_{w_1}\cdot \frac{(1-z)^{w_1}}{(1+(p-1)z)^{w_1}}\\
& & +A_{w_2}\cdot \frac{(1-z)^{w_2}}{(1+(p-1)z)^{w_2}}+A_{w_3}\cdot \frac{(1-z)^{w_3}}{(1+(p-1)z)^{w_3}}\bigg]\\
&=&(1+(p-1)z)^{p^m-1}+A_{w_1}\cdot (1-z)^{w_1}(1+(p-1)z)^{p^m-1-w_1}\\
&&+A_{w_2}\cdot (1-z)^{w_2}(1+(p-1)z)^{p^m-1-w_2}+A_{w_3}\cdot (1-z)^{w_3}(1+(p-1)z)^{p^m-1-w_3}.
\end{eqnarray*}

Denote

 \begin{equation}
   \left\{
   \begin{aligned}
U_1(k)&=\sum_{i+j=k}(-1)^i\binom{w_1}{i}\binom{p^m-1-w_1}{j}(p-1)^j,\\
U_2(k)&=\sum_{i+j=k}(-1)^i\binom{w_2}{i}\binom{p^m-1-w_2}{j}(p-1)^j,\\
U_3(k)&=\sum_{i+j=k}(-1)^i\binom{w_3}{i}\binom{p^m-1-w_3}{j}(p-1)^j.
\end{aligned}
\right.
\end{equation}
Then it can be proved that the value of $p^{m+r}A_k^\perp$ is exactly the coefficient of $z^k$ on the right hand side of the above equation.
That is,
\begin{equation} \label{mac:22}
\binom{p^m-1}{k}(p-1)^k+A_{w_1}U_1(k)+A_{w_2}U_2(k)+A_{w_3}U_3(k).
\end{equation}

Putting $k=0$ into $(\ref{mac:22})$, we easily derive that $A_0^\perp=1$.

Similarly,  putting $k=1$ into $(\ref{mac:22})$, we obtain
 \begin{eqnarray*} & &(p^m-1)(p-1)+A_{w_1}\left[(p^m-1-w_1)(p-1)-w_1\right]+A_{w_2}\left[(p^m-1-w_2)(p-1)-w_2\right]\\
&+& A_{w_3}\left[(p^m-1-w_3)(p-1)-w_3\right]=0.
\end{eqnarray*}
It then follows that $A_1^\perp=0$.

Plugging $k=2$ and $A_{w_1}, A_{w_2}, A_{w_3}$ into $(\ref{mac:22})$ leads to
 \begin{eqnarray*} & &\binom{p^m-1}{2}(p-1)^2\\
 & &+ A_{w_1}\left[\binom{w_1}{0}\binom{p^m-1-w_1}{2}(p-1)^2-\binom{w_1}{1}\binom{p^m-1-w_1}{1}(p-1)+\binom{w_1}{2}\binom{p^m-1-w_1}{0}\right]\\
& &+A_{w_2}\left[\binom{w_2}{0}\binom{p^m-1-w_2}{2}(p-1)^2-\binom{w_2}{1}\binom{p^m-1-w_2}{1}(p-1)+\binom{w_2}{2}\binom{p^m-1-w_2}{0}\right]\\
& &+A_{w_3}\left[\binom{w_3}{0}\binom{p^m-1-w_3}{2}(p-1)^2-\binom{w_3}{1}\binom{p^m-1-w_3}{1}(p-1)+\binom{w_3}{2}\binom{p^m-1-w_3}{0}\right]\\
&&=0.
\end{eqnarray*}
As a result,  $A_2^\perp=0$.

As for $A_3^\perp$, we have
 \begin{eqnarray*} & &p^{m+r}A_3^\perp=\binom{p^m-1}{3}(p-1)^3\\
 & &\ + A_{w_1}\cdot\bigg[\binom{w_1}{0}\binom{p^m-1-w_1}{3}(p-1)^3-\binom{w_1}{1}\binom{p^m-1-w_1}{2}(p-1)^2\\
 & &\ +\binom{w_1}{2}\binom{p^m-1-w_1}{1}(p-1)-\binom{w_1}{3}\binom{p^m-1-w_1}{0}\bigg]\\
& &\  +A_{w_2}\cdot\bigg[\binom{w_2}{0}\binom{p^m-1-w_2}{3}(p-1)^3-\binom{w_2}{1}\binom{p^m-1-w_2}{2}(p-1)^2\\
&&\ +\binom{w_2}{2}\binom{p^m-1-w_2}{1}(p-1)-\binom{w_2}{3}\binom{p^m-1-w_2}{0}\bigg]\\
& &\ +A_{w_3}\cdot\bigg[\binom{w_3}{0}\binom{p^m-1-w_3}{3}(p-1)^3-\binom{w_3}{1}\binom{p^m-1-w_3}{2}(p-1)^2\\
& &\ +\binom{w_3}{2}\binom{p^m-1-w_3}{1}(p-1)-\binom{w_1}{3}\binom{p^m-1-w_3}{0}\bigg]\\
& &\ =\frac{1}{6}(p-1)\cdot p^{m}\cdot \left[(p-1)^2p^{2m}-p^r(6+(p-6)p)-p^m(p+p^r(3p-5))\right].
\end{eqnarray*}
It can be proved that $A_3^\perp=0$ if  $m=r$ and $p=3$.

As for $A_4^\perp$,  $(\ref{mac:22})$ turns to
 \begin{eqnarray*} &&p^{m+r}A_4^\perp= \binom{p^m-1}{4}(p-1)^4\\
 & &\ \ + A_{w_1}\cdot\bigg[\binom{w_1}{0}\binom{p^m-1-w_1}{4}(p-1)^4-\binom{w_1}{1}\binom{p^m-1-w_1}{3}(p-1)^3\\
 & &\ \ +\binom{w_1}{2}\binom{p^m-1-w_1}{2}(p-1)^2-\binom{w_1}{3}\binom{p^m-1-w_1}{1}(p-1)+\binom{w_1}{4}\binom{p^m-1-w_1}{0}\bigg]\\
 & &\ \ + A_{w_2}\cdot\bigg[\binom{w_2}{0}\binom{p^m-1-w_2}{4}(p-1)^4-\binom{w_2}{1}\binom{p^m-1-w_2}{3}(p-1)^3\\
 & &\ \ +\binom{w_2}{2}\binom{p^m-1-w_2}{2}(p-1)^2-\binom{w_2}{3}\binom{p^m-1-w_2}{1}(p-1)+\binom{w_2}{4}\binom{p^m-1-w_2}{0}\bigg]\\
 & &\ \ + A_{w_3}\cdot\bigg[\binom{w_3}{0}\binom{p^m-1-w_3}{4}(p-1)^4-\binom{w_3}{1}\binom{p^m-1-w_3}{3}(p-1)^3\\
 & &\ \ +\binom{w_3}{2}\binom{p^m-1-w_3}{2}(p-1)^2-\binom{w_3}{3}\binom{p^m-1-w_3}{1}(p-1)+\binom{w_3}{4}\binom{p^m-1-w_3}{0}\bigg]\\
& &\ \ =\frac{1}{24}(p-1)\cdot p^{m}\cdot [p^m(16-10p+(p-1)p^m((p-1)^2p^m-p))\\
&&\ \ \ +p^r(-72+(3-2p)p^{2m}+2p^m(27+2p(6p-16))+p(120+p(9p-62)))].
\end{eqnarray*}

If $p=3$ and $r=m$, we get $p^{m+r}A_4^\perp=\frac{5}{4}\cdot 3^{2m-1}(9^m-4\times 3^m+3)\neq 0.$

The dual distance $d^\perp\geq3$  then follows. Especially, if $p=3$ and $r=m$, we have $d^\perp=4$.

\end{proof}

\subsection{Linear codes from $f_2(x)=x^{\frac{3^k+1}{2}}$ and $f_3(x)=x^{10}-ux^6-u^2x^2$}
If $f(x)$ is a planar function with form $f_2(x)=x^{\frac{3^k+1}{2}}$ or $f_3(x)=x^{10}-ux^6-u^2x^2$. It was proved in \cite{DingYuan06} that when $p=3$, $m$ is odd, the possible nonzero Hamming weights of codewords in $\mathcal{C}_{f,A}$ are $2\cdot 3^{m-1}$ and $2\cdot 3^{m-1}\pm 3^{\frac{m-1}{2}}$. Then we have the following conclusion.

 \begin{theorem}\label{thm:6}
 Let $p=3$. Let $m\geq 3$ be an odd integer. If $f(x)=x^{\frac{3^k+1}{2}}$ or $f(x)=x^{10}-ux^6-u^2x^2$, then the code $\mathcal{C}_{f, A}$ constructed in $(\ref{def:2})$  from $f(x)$ has parameters $[3^m-1, m+r, 2\cdot 3^{m - 1} - 3^{(m - 1)/2}]$ with  the  weight distribution in Table $\ref{table:4}$.

\begin{table}[h]

\centering
\caption{Weight distribution of codes from $\mathcal{C}_{f,A}$}

\begin{tabular}{ |l|l| }

\hline

\hline   Weight $w$ & Multiplicity $A_w$ \\
\hline

$0$ & $1$ \\

$2\cdot 3^{m - 1} - 3^{(m - 1)/2}$ & $3^{m + r - 1}-3^{(m - 1)/2} - 2\cdot 3^{m - 1} + 3^{r - 1} + 3^{(m - 1)/2 + r}
 $ \\

$2\cdot 3^{m - 1}$ & $3^{m + r - 1} - 2\cdot 3^{r - 1} + 4\cdot 3^{m - 1} - 1$ \\

$2\cdot 3^{m - 1} + 3^{(m - 1)/2}$ & $3^{(m - 1)/2} - 2\cdot 3^{m - 1} + 3^{r - 1} -3^{(m - 1)/2 + r} +
 3^{m + r - 1}$ \\

\hline
\end{tabular}
\label{table:4}
\end{table}

Furthermore, the dual distance $d^\perp$ of  $\mathcal{C}_{f, A}$ is $4$ if $r=m$ and $3$ otherwise.
 \end{theorem}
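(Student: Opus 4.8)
The plan is to mirror the argument of Theorem \ref{thm:8} as closely as possible, the key observation being that the three nonzero weights occurring here are exactly those that appear in Theorem \ref{thm:8} when $p=3$. First I would record the easy structural facts. By the definition $(\ref{def:2})$ the length is $3^m-1$, and exactly as in the proof of Theorem \ref{thm:1} the only codeword vanishing on all of $\text{GF}(3^m)^*$ is $c_{0,0}$, so $\mathcal{C}_{f,A}$ has $3^{m+r}$ codewords and dimension $m+r$. The same reasoning gives $A_1^\perp=A_2^\perp=0$: no single coordinate $x$ makes $\text{Tr}_1^m(af(x)+bx)=0$ for all $a\in A$, $b\in\text{GF}(3^m)$, and no pair of distinct coordinates $x\neq y$ forces $\text{Tr}_1^m(af(x)+bx)=\text{Tr}_1^m(af(y)+by)$ for all such $a,b$.

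Next I would invoke \cite{DingYuan06}, which guarantees that for $p=3$ and odd $m$ the only possible nonzero weights of $c_{a,b}$ are $w_1=2\cdot 3^{m-1}-3^{(m-1)/2}$, $w_2=2\cdot 3^{m-1}$, and $w_3=2\cdot 3^{m-1}+3^{(m-1)/2}$. These are precisely the weights of Theorem \ref{thm:8} specialized to $p=3$, so the first three Pless power moments give exactly the same $3\times 3$ linear system in $A_{w_1},A_{w_2},A_{w_3}$ as there, namely $1+A_{w_1}+A_{w_2}+A_{w_3}=3^{m+r}$ together with the two weighted moments $\sum_i w_iA_{w_i}=2\cdot 3^{m+r-1}(3^m-1)$ and $\sum_i w_i^2A_{w_i}=2\cdot 3^{m+r-2}(3^m-1)(2\cdot 3^m-1)$. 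Solving this system produces the entries of Table \ref{table:4}; in fact one checks directly that the solution is the $p=3$ specialization of Table \ref{table:3}, and since $A_{w_1}>0$ for $r\geq 1$ the minimum distance is indeed $w_1$.

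For the dual I would apply the MacWilliams Identity exactly as in Theorem \ref{thm:8}. Because the weight enumerator of $\mathcal{C}_{f,A}$ coincides with that of the $f_1$-code at $p=3$, the expression for $3^{m+r}A_k^\perp$ is identical to $(\ref{mac:22})$ with $p=3$, so the $k=0,1,2$ coefficients again yield $A_0^\perp=1$ and $A_1^\perp=A_2^\perp=0$. Evaluating the $k=3$ coefficient collapses to $3^{m+r}A_3^\perp=3^m(3^m-3^r)(4\cdot 3^{m-1}-1)$, which vanishes if and only if $r=m$; hence $d^\perp=3$ whenever $r\neq m$ (combined with $A_1^\perp=A_2^\perp=0$, which already forces $d^\perp\geq 3$). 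When $r=m$ I would evaluate the $k=4$ coefficient, obtaining $3^{2m}A_4^\perp=\tfrac{5}{4}\cdot 3^{2m-1}(3^m-1)(3^m-3)$, which is strictly positive for $m\geq 3$; thus $A_4^\perp\neq 0$ and $d^\perp=4$ in that case.

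The hard part will be the bookkeeping in the MacWilliams computation of $A_3^\perp$ and $A_4^\perp$: these are alternating binomial convolutions in $w_1,w_2,w_3$ that must be expanded and simplified to the stated closed forms before the $r=m$ dichotomy becomes visible, and the $A_4^\perp$ sum in particular is error-prone. The conceptual point that eliminates most of the labour is that neither the Pless power moments nor the MacWilliams transform sees the planar function $f$ beyond the multiset of weights it attains; once the three weights for $f_2$ and $f_3$ are matched to the $p=3$ case of Theorem \ref{thm:8}, both the primary weight distribution and the entire dual-distance calculation transfer verbatim, so the proof reduces to citing \cite{DingYuan06} for the weight list and then reusing the computations already carried out for $f_1$.
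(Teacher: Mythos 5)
Your proposal follows essentially the same route as the paper's proof: it establishes the length, dimension, and $A_1^\perp=A_2^\perp=0$ as in Theorem \ref{thm:1}, cites \cite{DingYuan06} for the three possible nonzero weights, solves the first three Pless power moments for the multiplicities, and then extracts $A_3^\perp$ and $A_4^\perp$ from the MacWilliams transform to settle the dual distance; your closed forms $3^{m+r}A_3^\perp=3^{m-1}(4\cdot 3^m-3)(3^m-3^r)$ and $3^{2m}A_4^\perp=\tfrac{5}{4}\cdot 3^{2m-1}(3^m-1)(3^m-3)$ at $r=m$ agree with the paper's. The only (harmless) difference is organizational: you observe that everything is the $p=3$ specialization of Theorem \ref{thm:8}, whereas the paper redoes the computation from scratch.
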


 \begin{proof}
One can easily see that the length of the code $\mathcal{C}_{f, A}$ is $n=3^m-1$, the dimension is $k=m+r$, and $A_1^\perp, A_2^\perp=0$. According to the first three Pless power moments, we have
 \begin{equation}
   \left\{
   \begin{aligned}
1+A_{w_1}+A_{w_2}+A_{w_3}&=3^{m+r},\\
w_1A_{w_1}+w_2A_{w_2}+w_3A_{w_3}&=3^{m+r-1} \cdot 2(3^m-1),\\
w_1^2A_{w_1}+w_2^2A_{w_2}+w_3^2A_{w_3}&=3^{m+r-2}\cdot (2-2\cdot3^{1+m}+ 4\cdot 9^m),\\
\end{aligned}
\right.
\end{equation}
where $w_1=2\cdot 3^{m-1}- 3^{\frac{m-1}{2}}$, $w_2=2\cdot 3^{m-1}$,  $w_3=2\cdot 3^{m-1}+ 3^{\frac{m-1}{2}}.$

The weight distribution of $\mathcal{C}_{f, A}$ then follows by solving the above linear equations. Specifically,
\begin{equation}
   \left\{
   \begin{aligned}
A_{w_1}&=-3^{(m - 1)/2} - 2\cdot 3^{m - 1} + 3^{r - 1} + 3^{(m - 1)/2 + r} +
 3^{m + r - 1},\\
A_{w_2}&=3^{m + r - 1} - 2\cdot 3^{r - 1} + 4\cdot 3^{m - 1} - 1,\\
A_{w_3}&=3^{(m - 1)/2} - 2\cdot 3^{m - 1} + 3^{r - 1} -3^{(m - 1)/2 + r} +
 3^{m + r - 1}.
\end{aligned}
\right.
\end{equation}

Consequently, the weight enumerator of  $\mathcal{C}_{f, A}$ can be expressed as:

\begin{equation}
A(z)=\sum_{i=0}^nA_iz^i=1+A_{w_1}z^{w_1}+A_{w_2}z^{w_2}+A_{w_3}z^{w_3}.
\end{equation}

The MacWilliams Identity leads to

 \begin{eqnarray*}3^{m+r}A^\perp(z) &= &(1+2z)^{n}\cdot A(\frac{1-z}{1+2z})\\
 &=& (1+2z)^{n}\cdot \bigg[1+A_{w_1}\cdot \frac{(1-z)^{w_1}}{(1+2z)^{w_1}}\\
 & & +A_{w_2}\cdot \frac{(1-z)^{w_2}}{(1+2z)^{w_2}}+A_{w_3}\cdot \frac{(1-z)^{w_3}}{(1+2z)^{w_3}}\bigg]\\
&=&(1+2z)^{n}+A_{w_1}\cdot (1-z)^{w_1}(1+2z)^{n-w_1}\\
& &+A_{w_2}\cdot (1-z)^{w_2}(1+2z)^{n-w_2}+A_{w_3}\cdot (1-z)^{w_3}(1+2z)^{n-w_3},
\end{eqnarray*}
where $n=3^m-1$.

Denote

 \begin{equation}
   \left\{
   \begin{aligned}
U_1(k)&=\sum_{i+j=k}(-1)^i\binom{w_1}{i}\binom{3^m-1-w_1}{j}\cdot 2^j,\\
U_2(k)&=\sum_{i+j=k}(-1)^i\binom{w_2}{i}\binom{3^m-1-w_2}{j}\cdot 2^j,\\
U_3(k)&=\sum_{i+j=k}(-1)^i\binom{w_3}{i}\binom{3^m-1-w_3}{j}\cdot 2^j.
\end{aligned}
\right.
\end{equation}
Then the value of $3^{m+r}A_k^\perp$ is exactly the coefficient of $z^k$ on the right hand side of the above equation.
That is
\begin{equation}\label{mac:2}
\binom{n}{k}2^k+A_{w_1}U_1(k)+A_{w_2}U_2(k)+A_{w_3}U_3(k).
\end{equation}

Put $k=1, 2, 3$ individually into Equation $(\ref{mac:2})$, we have $A_1^\perp=A_2^\perp=0 $,  and
$$3^{m+r}A_3^\perp=3^{m-1}\cdot (4\cdot 3^m-3)\cdot (3^m-3^r).$$
It is obvious that $A_3^\perp=0$ when $r=m$.

Similarly, we have
$$3^{m+r}A_4^\perp=\frac{1}{4}\cdot 3^{m-1}\cdot (14\cdot 3^{m+1}-3^{3+r}+14\cdot 3^{m+r+1}-3^{2m+r+1}-62\cdot 9^m+8\cdot 27^m).$$

It can be verified that $A_4^\perp \neq 0$  even if $r=m.$
So the dual distance $d^\perp$ of  $\mathcal{C}_{f, A}$ is $4$ if $r=m$ and $3$ otherwise.
\end{proof}

\begin{theorem}\label{thm:7}
Let $p=3.$ Let $\mathcal{C}_{f,A}$  be the code presented in Theorem $\ref{thm:6}$. Then the code $\overline{\mathcal{C}_{(f,A)}^\perp}^\perp$ has parameters $[3^m,m+r+1, 2\cdot 3^{m - 1} - 3^{(m - 1)/2}]$ with weight distribution in Table $\ref{table:5}$.
\begin{table}[h]

\centering
\caption{Weight distribution of $\overline{\mathcal{C}_{(f,A)}^\perp}^\perp$}

\begin{tabular}{ |l|l| }

\hline

\hline   Weight $w$ & Multiplicity $A_w$ \\
\hline

$0$ & $1$ \\

$2\cdot 3^{m - 1} - 3^{(m - 1)/2}$ & $ 3^{m }  (3^r-1) $ \\

$2\cdot 3^{m - 1}$ & $3^{m + r } + 2\cdot 3^{m } - 3$ \\

$2\cdot 3^{m - 1} + 3^{(m - 1)/2}$ & $3^{m}  (3^r-1) $ \\

$3^m$ & $2$\\

\hline
\end{tabular}
\label{table:5}
\end{table}

\end{theorem}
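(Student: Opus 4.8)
The plan is to mirror the strategy of Theorem~\ref{thm:6} and Theorem~\ref{thm:4}, but since Lemma~\ref{thm:3} is stated only for binary codes I will first replace its role by a direct structural description of $\mathcal{E}:=\overline{\mathcal{C}_{f,A}^\perp}^\perp$ valid over $\text{GF}(3)$. The length and dimension are immediate from duality and extension: extending adjoins one coordinate and preserves dimension, while dualizing sends dimension $k$ to (length)$\,-k$. Hence $\mathcal{E}$ has length $3^m$ and dimension
\[
(3^m-1+1)-\dim\overline{\mathcal{C}_{f,A}^\perp}=3^m-\big(3^m-1-(m+r)\big)=m+r+1,
\]
where I used $\dim\overline{\mathcal{C}_{f,A}^\perp}=\dim\mathcal{C}_{f,A}^\perp$ from Theorem~\ref{thm:6}.

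For the weight distribution I would first establish the identity
\[
\mathcal{E}=\{\,[c\mid 0]+\gamma\mathbf{1}:\ c\in\mathcal{C}_{f,A},\ \gamma\in\text{GF}(3)\,\},
\]
where $[c\mid 0]$ denotes $c$ with a $0$ appended and $\mathbf{1}$ is the all-ones vector of length $3^m$. Indeed, every extended codeword of $\overline{\mathcal{C}_{f,A}^\perp}$ has coordinate-sum $0$, so $\mathbf{1}\in\mathcal{E}$; and $[c\mid 0]\in\mathcal{E}$ because $c\in\mathcal{C}_{f,A}=(\mathcal{C}_{f,A}^\perp)^\perp$ is orthogonal to $\mathcal{C}_{f,A}^\perp$. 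Since $\mathbf{1}$ has last coordinate $1$ while the $[c\mid 0]$ have last coordinate $0$, these vectors span a subspace of dimension $\dim\mathcal{C}_{f,A}+1=m+r+1=\dim\mathcal{E}$, forcing equality.

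It then remains to compute the weights of the codewords $[c\mid 0]+\gamma\mathbf{1}$. For $\gamma=0$ these are exactly the weights of $\mathcal{C}_{f,A}$ in Table~\ref{table:4}. For $\gamma\in\{1,2\}$ the last coordinate contributes $1$, and the total weight of $c_{a,b}+\gamma\mathbf{1}$ equals $3^m-N_{-\gamma}$, where $N_v:=\#\{x\in\text{GF}(3^m):\text{Tr}_1^m(af(x)+bx)=v\}$. Writing $N_v=\tfrac13\big[3^m+\omega^{-v}S+\omega^{-2v}\overline{S}\big]$ with $\omega=e^{2\pi i/3}$ and $S=\sum_{x}\omega^{\text{Tr}_1^m(af(x)+bx)}$, I would invoke the planar Weil-sum evaluation used in \cite{DingYuan06,Fengluo07}: for $a\neq0$, $S$ is a shifted quadratic Gauss sum of modulus $3^{m/2}$, so each of $\mathrm{Re}(S),\mathrm{Re}(\omega S),\mathrm{Re}(\omega^2 S)$ lies in $\{0,\pm\tfrac{\sqrt3}{2}3^{m/2}\}$. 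Consequently every $N_v\in\{3^{m-1},3^{m-1}\pm3^{(m-1)/2}\}$, so each weight $3^m-N_{-\gamma}$ lies in $\{w_1,w_2,w_3\}$; for $a=0,b\neq0$ one gets $N_v=3^{m-1}$ and weight $w_2$; and $N_{-\gamma}=0$ occurs only for $c=0$, which yields precisely the two full-weight words $\mathbf{1},2\mathbf{1}$. Hence the nonzero weights of $\mathcal{E}$ are $w_1,w_2,w_3,3^m$ and $A_{3^m}=2$.

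With the weight set and $A_0=1$, $A_{3^m}=2$ in hand, I would finish exactly as in Theorems~\ref{thm:6} and~\ref{thm:4}: because $d(\overline{\mathcal{C}_{f,A}^\perp})\ge d(\mathcal{C}_{f,A}^\perp)\ge3$ (extending never decreases distance), the dual of $\mathcal{E}$ has no words of weight $1$ or $2$, so the first three Pless power moments reduce to a $3\times3$ linear system in $A_{w_1},A_{w_2},A_{w_3}$; solving it gives Table~\ref{table:5}, and the smallest weight occurring (for $r\ge1$) is $w_1=2\cdot3^{m-1}-3^{(m-1)/2}$, the claimed minimum distance. The main obstacle is the $\gamma\neq0$ step: unlike the binary case of Lemma~\ref{thm:3}(3), where complementation turns weight $w$ into $n+1-w$ automatically, over $\text{GF}(3)$ the coset weights are governed by the individual symbol-frequencies $N_1,N_2$ rather than by the Hamming weight of $c$ alone, so one genuinely needs the finer Gauss-sum evaluation of $S$ (and the fact that multiplying $S$ by $\omega$ keeps its real part in the same three-element set) rather than merely the three-weight distribution of $\mathcal{C}_{f,A}$.
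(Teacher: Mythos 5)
Your proposal is correct, and it is in fact more complete than what the paper offers: the paper omits the proof of Theorem~\ref{thm:7} entirely, asserting only that it is ``similar to the proof of the previous theorems'' and recording a few dual weight values. The skeleton you use (length and dimension from duality/extension, then the first three Pless power moments with $A_0=1$, $A_{3^m}=2$ and $A_1^\perp=A_2^\perp=0$ to pin down $A_{w_1},A_{w_2},A_{w_3}$) is exactly the skeleton of Theorems~\ref{thm:4} and~\ref{thm:6}, so in that sense you follow the paper's intended route. Where you genuinely diverge --- and add value --- is in justifying the weight \emph{set} $\{w_1,w_2,w_3,3^m\}$ and the multiplicity $A_{3^m}=2$. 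The paper's only tool for this step in the binary case is Lemma~\ref{thm:3}(3), which is stated for binary codes and whose conclusion (nonzero weights $w_i$, $n+1-w_i$, $n+1$) is visibly false over $\text{GF}(3)$ here: Table~\ref{table:5} contains no weights of the form $3^m-w_i$ and has the all-one weight with multiplicity $2$, not $1$. Your explicit decomposition $\overline{\mathcal{C}_{f,A}^\perp}{}^\perp=\{[c\mid 0]+\gamma\mathbf{1}\}$ together with the evaluation of $N_v$ via the Gauss-sum argument of \cite{DingYuan06,Fengluo07} is precisely the ternary substitute that the paper's ``similar proof'' claim silently presupposes; your closing observation that the coset weights depend on the individual symbol frequencies $N_1,N_2$ rather than on $\mathrm{wt}(c)$ alone identifies exactly why the binary lemma does not transfer. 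One small point worth making explicit if you write this up: the three power-moment equations you solve do reproduce the entries of Table~\ref{table:5} (one checks $\sum_w A_w=3^{m+r+1}$ and $\sum_w wA_w=2\cdot 3^{2m+r}$ directly), so no further input is needed.
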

\begin{rmk}
Here we omit the proof of this theorem because its proof is similar to the proof of the previous theorems. We just leave the following values about the dual of the constructed codes for reference:

$A_1^\perp=A_2^\perp=0$, $A_3^\perp=9^m(3^m-3^r)$, $A_4^\perp=\frac{1}{4}\cdot 3^{2m+1}(3^m-3)(3^m-3^r)$,
$A_5^\perp=\frac{1}{4}\cdot 9^m(3^m-3)(-7\times 3^m+2\times 3^{r+1}+9^m)$.

Hence, the minimum distance  of $\overline{\mathcal{C}_{(f,A)}^\perp}$ is $5$ if $r=m$, and $3$ otherwise.
\end{rmk}

If $p=3$, all the codes defined from the known planar functions in $(\ref{def:2})$ have the same parameters and the same weight distribution. We can also obtain many optimal codes  from the codes that we have constructed via planar functions.
\begin{example}\label{exam:2}
If $p=3$ and $m=3$, the constructed codes with parameters $[26, 6, 15]$, $[26, 5, 15]$ from $\mathcal{C}_{(f,A)}$, $[26,20,4]$, $[26,21,3]$, $[26,22,3]$ from $\mathcal{C}_{(f,A)}^\perp$, $[27,6,15]$ from $\overline{\mathcal{C}_{(f,A)}^\perp}$, $[27,21,4]$, $[27,22,3]$, $[27,23,3]$ from $\overline{\mathcal{C}_{(f,A)}^\perp}^\perp$ are all optimal  linear codes according to the Code Table at http://www.codetables.de/.

If $p=3$ and $m=5$, we also obtain optimal codes with parameters $[242,10, 153]$, $[242,9,153]$, $[242,232, 4]$, $[242, 235, 3]$, $[242, 236, 3]$, $[243, 10, 153]$, $[243, 9, 153]$, $[243, 233,4]$, $[243, 236, 3]$, $[243, 237,3]$ , and almost optimal codes with parameters $[242, 233,3]$, $[242, 234,3]$, $[243, 234, 4]$, $[243, 235, 3]$.

If $p=5$, $m=3$, we have optimal codes  with parameters$[124,6,95]$, $[124, 119,3]$, $[124,120,3]$, $[125,6,95]$, $[125,120,3]$, $[125,121,3]$, and almost optimal codes  with parameters $[124,118,3]$, $[125,119,3]$.

\end{example}

\section{Secret sharing schemes from some of the constructed codes}

In this section, we discuss the secret sharing schemes from the codes presented in this paper.

A secret sharing scheme consists of

- a dealer;

- a group $\mathcal{P}=\{P_1, P_2, \cdots, P_l\}$ of $l$ participants;

- a secret space $\mathcal{S}$ with the uniform probability distribution;

- $l$ share spaces $S_1, S_2, \cdots, S_l$;

- a share computing procedure $\mathcal{F}$; and

- a secret recovering procedure $\mathcal{G}$.

 The dealer randomly chooses an element $s\in S$ as the secret to be shared among the
participants in $\mathcal{P}$, and then employ the sharing computing procedure $\mathcal{F}$ to compute shares $(s_1, s_2, \cdots , s_l
) $  with  possibly
some other random parameters. Finally the dealer distributes $s_i$ to Participant $P_i$ as his/her share of
the secret $s$ for each $i$ with $1 \leq i \leq l$.

If a subset of participants $\{P_{i1} , P_{i2}, \cdots, P_{im}
\}(m\leq l)$ is able to recover $s$ from their shares,
we call the subset $\{P_{i1} , P_{i2}, \cdots, P_{im}
\}$ an access set.
All the access sets form a set called
the access structure. An access set is minimal
if any proper subset of it is not an access set. The set of all minimal access sets is called the
minimal access structure.

A secret sharing scheme is said to be $t$-democratic if every group of $t$ participants is in
the same number of minimal access sets, where $t\geq 1$. A participant is called a dictator if
she/he is a member of every minimal access set.

Secret sharing schemes have applications in banking system, cryptographic protocols, and the control of nuclear weapons.  Two constructions of secret sharing schemes from linear codes have been investigated in the literature. One of them is to construct secret sharing scheme from the duals of  minimal codes.

For a code $\mathcal{C}$ over GF$(q)$, the support of the codeword $c=\{c_0, c_1, \cdots, c_{n-1}\}\in \mathcal{C}$ is defined by
$$\text{supp}(c)=\{i: c_i\neq 0, 0\leq i\leq n-1\}.$$
We call a codeword $c\in \mathcal{C}$ covers a codeword $c'\in \mathcal{C}$ if $\text{supp}(c')\subseteq \text{supp}(c)$.

A codeword $c\in \mathcal{C}$  is minimal if it covers only codewords of the form $a\cdot c$, where $a\in \text{GF}(q)$. A linear code $\mathcal{C}$ is minimal if every codeword $c\in \mathcal{C}$ is minimal. There is a sufficient condition for a linear code to be minimal in \cite{Ashikhmin}:

\begin{lemma}\label{lem:2}
For an $[n,k,d]$ code $\mathcal{C}$ over  $\text{GF}(q)$, if $\frac{w_{min}}{w_{max}}>\frac{q-1}{q}$, where $w_{min}$ and $w_{max}$ denote the minimum and maximum nonzero weights of $\mathcal{C}$ respectively, then $\mathcal{C}$ is  minimal.
\end{lemma}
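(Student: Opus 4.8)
The plan is to prove the contrapositive by a counting argument: I will show that any failure of minimality forces the weight ratio down to $w_{min}/w_{max}\le (q-1)/q$, contradicting the hypothesis. So suppose $\mathcal{C}$ is not minimal. Then there is a nonzero codeword $u$ that covers a nonzero codeword $v$ with $v$ not a scalar multiple of $u$. Write $T=\mathrm{supp}(u)$ and $S=\mathrm{supp}(v)$, so that $S\subseteq T$. Because $v$ is not proportional to $u$, the codeword $u-\lambda v$ is nonzero for every $\lambda\in\text{GF}(q)$; in particular each of the $q-1$ codewords with $\lambda\ne 0$ is a genuine nonzero codeword, hence has weight at least $w_{min}$, while $u$ itself has weight at most $w_{max}$.

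The heart of the proof is a coordinate-by-coordinate evaluation of $\sum_{\lambda\in\text{GF}(q)}\mathrm{wt}(u-\lambda v)$. First I would note that a coordinate $i\in T\setminus S$ has $v_i=0$, so $(u-\lambda v)_i=u_i\ne 0$ for all $q$ choices of $\lambda$; a coordinate $i\in S$ has $v_i\ne 0$, so $(u-\lambda v)_i$ vanishes for exactly one value $\lambda=u_i v_i^{-1}$ and is nonzero for the other $q-1$ values. Summing these per-coordinate contributions gives the clean identity
\[
\sum_{\lambda\in\text{GF}(q)}\mathrm{wt}(u-\lambda v)=q\,|T\setminus S|+(q-1)\,|S|=q\,\mathrm{wt}(u)-|S|.
\]
Isolating the $\lambda=0$ term (which equals $\mathrm{wt}(u)$) and bounding each of the remaining $q-1$ nonzero codewords below by $w_{min}$ yields $(q-1)w_{min}\le (q-1)\mathrm{wt}(u)-|S|$. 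Since $|S|=\mathrm{wt}(v)\ge w_{min}$ and $\mathrm{wt}(u)\le w_{max}$, a short rearrangement gives $q\,w_{min}\le (q-1)w_{max}$, i.e. $w_{min}/w_{max}\le (q-1)/q$, the desired contradiction. Hence every nonzero codeword of $\mathcal{C}$ is minimal.

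There is no deep obstacle here, but two points require care. The first is the non-vanishing of $u-\lambda v$ for all $\lambda\ne 0$, which is exactly what licenses applying the uniform lower bound $w_{min}$ to all $q-1$ terms; this is where the assumption that $v$ is not a scalar multiple of $u$ is used essentially. The second is the correct bookkeeping of the two boundary estimates $\mathrm{wt}(v)\ge w_{min}$ and $\mathrm{wt}(u)\le w_{max}$ when turning the summation identity into the ratio inequality. Once the counting identity above is established, the contradiction follows immediately, so the summation step is really the crux of the argument.
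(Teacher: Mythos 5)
Your argument is correct. Note that the paper itself gives no proof of this lemma: it is quoted as a known sufficient condition and attributed to Ashikhmin and Barg, so there is no internal proof to compare against. What you have written is precisely the standard Ashikhmin--Barg argument: assuming a non-minimal pair $u,v$ with $\mathrm{supp}(v)\subseteq\mathrm{supp}(u)$ and $v$ not proportional to $u$, the per-coordinate count gives $\sum_{\lambda\in\mathrm{GF}(q)}\mathrm{wt}(u-\lambda v)=q\,\mathrm{wt}(u)-\mathrm{wt}(v)$, and since all $q$ codewords $u-\lambda v$ are nonzero, bounding the $q-1$ terms with $\lambda\neq 0$ below by $w_{min}$ and using $\mathrm{wt}(v)\geq w_{min}$, $\mathrm{wt}(u)\leq w_{max}$ yields $q\,w_{min}\leq(q-1)\,w_{max}$, contradicting the hypothesis. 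Both delicate points you flag (non-vanishing of every $u-\lambda v$, and the two boundary estimates) are handled correctly, so your proposal is a complete, self-contained proof of the cited result.
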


By this lemma, we have the following facts:

$\bullet$ Linear codes $\mathcal{C}_{g, A}$ presented in Theorem  $\ref{thm:1}$ are minimal provided that $m>3$.

$\bullet$ Linear codes $\mathcal{C}_{f, A}$ presented in Theorem $\ref{thm:8}$, $\ref{thm:6}$ are minimal.

The following theorem describes the access structure of the secret sharing
scheme based on the dual of a minimal linear code that was developed by Ding
and Yuan in \cite{Dingsecret}.
\begin{theorem}\label{thm:secret}
Let $\mathcal{C}$ be an $[n, k, d]$ code over GF($q$), and let $H = [h_0, h_1, \cdots, h_{n-1}]$
be its parity-check matrix. If $C^\perp$ is minimal, then in the secret sharing scheme
based on $C$, the set of participants is
 $\mathcal{P}= \{P_1, P_2, \cdots, P_{n-1}\}$, and there are altogether $q^{n-k-1}$ minimal access sets.

$\bullet$ When $d = 2$, the access structure is as follows.
If $h_i$ is a scalar multiple of $h_0, 1 \leq i \leq n-1$, then Participant $P_i$ must be in every minimal access set. If $h_i$ is not a scalar multiple of $h_0, 1 \leq i \leq n-1$, then Participant $P_i$ must be in $(q-1)q^{n-k-2}$ out of $q^{n-k-1}$ minimal access sets.

 $\bullet$ When $d \geq 3$, for any fixed $1 \leq t \leq min\{n-k-1, d-2\} $, every group of $t$ participants is involved in $(q -1)^tq^{n-k-(t+1)}$ out of $q^{n-k-1}$ minimal access sets.
\end{theorem}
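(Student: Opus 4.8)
The plan is to reduce the whole statement to elementary linear algebra on the columns of $H$ via Massey's correspondence between access sets and codewords of $\mathcal{C}^\perp$. First I would recall the mechanics of the scheme based on $\mathcal{C}$: with $G$ a generator matrix of $\mathcal{C}$, the dealer encodes a message $u\in\text{GF}(q)^k$ as the codeword $uG$, treats the $0$-th coordinate as the secret $s$, and hands coordinates $1,\dots,n-1$ to $P_1,\dots,P_{n-1}$. A set $\{P_{i_1},\dots,P_{i_m}\}$ recovers $s$ exactly when the column $g_0$ lies in $\spn\{g_{i_1},\dots,g_{i_m}\}$; since $v\in\mathcal{C}^\perp$ iff $\sum_l v_l g_l=0$, this is equivalent to the existence of $v\in\mathcal{C}^\perp$ with $v_0=1$ and $\text{supp}(v)\subseteq\{0,i_1,\dots,i_m\}$.

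The key step is to upgrade this to a bijection between minimal access sets and codewords $v\in\mathcal{C}^\perp$ with $v_0=1$, using the hypothesis that $\mathcal{C}^\perp$ is minimal. Because a nonzero codeword of a minimal code covers only its own scalar multiples, and the normalization $v_0=1$ selects a unique representative of each such line, every $v$ with $v_0=1$ yields the minimal access set $\text{supp}(v)\setminus\{0\}$, and distinct such $v$ give distinct access sets with no proper containment. Writing $\mathcal{C}^\perp$ as the row space of $H$, a codeword is $v=yH$ with $v_l=y\cdot h_l$, so $v_0=y\cdot h_0=1$ defines a coset of the hyperplane $\{y:y\cdot h_0=0\}$; this coset is nonempty and of size $q^{n-k-1}$ because $h_0\neq 0$ (a zero column would force $d=1$), giving exactly $q^{n-k-1}$ minimal access sets.

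For the participation counts, fix participants $i_1,\dots,i_t\in\{1,\dots,n-1\}$ and consider the linear map $\Phi:\text{GF}(q)^{n-k}\to\text{GF}(q)^{t+1}$, $y\mapsto(y\cdot h_0,\,y\cdot h_{i_1},\dots,y\cdot h_{i_t})$; the number of minimal access sets containing all of $P_{i_1},\dots,P_{i_t}$ equals the number of $y$ with $\Phi(y)\in\{1\}\times(\text{GF}(q)^*)^t$. The crucial input is that $d$ equals the least number of linearly dependent columns of $H$, so any $t+1\le d-1$ distinct columns are independent; hence for $1\le t\le\min\{n-k-1,d-2\}$ the columns $h_0,h_{i_1},\dots,h_{i_t}$ are independent, $\Phi$ is surjective, and each fiber has size $q^{n-k-(t+1)}$. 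Multiplying by the $(q-1)^t$ admissible target tuples yields $(q-1)^t q^{n-k-(t+1)}$. The case $d=2$ is the same computation with $t=1$, except that $h_i$ may now be a scalar multiple of $h_0$: if $h_i=\alpha h_0$ then $v_i=\alpha v_0=\alpha\neq 0$ forces $P_i$ into every minimal access set, whereas otherwise $h_0,h_i$ are independent and the surjectivity argument returns $(q-1)q^{n-k-2}$.

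I expect the main obstacle to be the careful verification of the bijection in the second step, namely arguing that under minimality of $\mathcal{C}^\perp$ the set $\text{supp}(v)\setminus\{0\}$ is genuinely a minimal access set (no proper subset recovers $s$) and that the map $v\mapsto\text{supp}(v)\setminus\{0\}$ is one-to-one. Everything afterward is surjectivity of $\Phi$ together with coset counting, which is routine once the column-independence property of $H$ is invoked.
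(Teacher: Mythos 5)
The paper does not prove this theorem; it is quoted verbatim from Ding--Yuan \cite{Dingsecret}, so there is no in-paper argument to compare against. Your proposal is correct and is essentially the standard proof of that cited result: Massey's correspondence reduces minimal access sets to the coset $\{y : y\cdot h_0 = 1\}$ (using minimality of $\mathcal{C}^\perp$ to identify each minimal access set with the unique normalized codeword $yH$ supported on it), and the participation counts follow from the fact that any $d-1$ columns of $H$ are linearly independent, making your map $\Phi$ surjective with equal fibers. The only point to write out fully is the surjectivity half of the bijection --- that every minimal access set $A$ actually arises as $\mathrm{supp}(v)\setminus\{0\}$ for some $v$ with $v_0=1$ (take any dual codeword witnessing that $A$ is an access set, normalize, and invoke minimality of $A$) --- without which you only get a lower bound of $q^{n-k-1}$ on the number of minimal access sets; this is routine and consistent with the bijection you announce.
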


Theorem $\ref{thm:secret}$ shows that the secret sharing scheme based on the dual of any minimal code
given in Theorem $\ref{thm:1}$, $\ref{thm:8}$, $\ref{thm:6}$  has interesting access structures. Since the dual distance $d$ of the constructed codes are all satisfying $d\geq 3$, the access structure of the corresponding secret sharing scheme is  democratic in the sense that every participant plays the same role in every decision making. On the other hand, the minimal access structures vary in size since the dimension of our constructed codes varies. This makes our secret sharing schemes more flexible and have much more interesting applications.

\section{Conclusion}

In this paper, we generalized a construction of a family of linear codes from almost bent functions and planar functions by making the dimension of them more flexible. We determined the parameters of all the constructed codes as well as parameters of their related codes such as dual codes, extended codes, duals of extended codes. A large family of optimal codes and almost optimal codes were given by our construction. As it was shown in Examples \ref{exam:1} and \ref{exam:2}, we obtained much more good linear codes with flexible dimensions.

Moreover, we found out that many of our constructed codes are minimal codes, and can be used to construct secret sharing schemes. What motivated us most is that the secret sharing schemes obtained from our constructed codes are more flexible and interesting, and can be used in much more applications. All of these show that the study of this paper is worthwhile and well motivated.

\end{document}